\theoremstyle{plain}
\newtheorem{theorem}{Theorem}[section]
\newtheorem{proposition}[theorem]{Proposition}
\newtheorem{lemma}[theorem]{Lemma}
\theoremstyle{definition}
\newtheorem{definition}[theorem]{Definition}
\newtheorem{example}[theorem]{Example}
\newcommand     \ar             {{\sf ar}}
\newcommand     \inter          {\bowtie}
\newcommand     \nested[1]        {\langle {#1} \rangle}
\newcommand{\ito}{\Rightarrow}         
\newcommand{\rto}{\rightarrow}         
\newcommand{\sym}[1]{{\mbox{\tt{}#1}}}
\newenvironment{program}{\begin{alltt}\small{}}{\end{alltt}}
\newcommand{\ie}{\textit{i}.\textit{e}.\ }
\newcommand{\normalscale}{0.72} 
\newcommand{\smallscale}{0.62} 
\newcommand{\tinyscale}{0.55} 
\title{Conditional Nested Pattern Matching in Interaction Nets}
\author{Shinya Sato
\institute{Institute for Liberal Arts Education\\Ibaraki University, Ibaraki, Japan}}
\begin{document}
\maketitle
\bibliographystyle{eptcs}

\begin{abstract}
Interaction nets are a form of restricted graph rewrite system that can serve as a graphical or textual programming language. As such, benefits include one-step confluence, ease of parallelism and explicit garbage collection.  However, some of these restrictions burden the programmer, so they have been extended in several ways, notably to include data types and conditional rules. This paper introduces a further extension to allow nested pattern matching and to do so in a way that preserves these benefits and fundamental properties of interaction nets. We also show that by introducing a translation to non-nested matching, 
this extension is conservative in rewriting.
In addition, we propose a new notation to express this pattern matching.
\end{abstract}

\section{Introduction}\label{sec:introduction}

Interaction nets, proposed by Lafont~\cite{LafontY:intn}, can be
considered as an execution model of programming languages, where
programs are described as graphs and computation is realised by graph
reduction.  
They have been used for optimal
reduction~\cite{LampingJ:algolc,GonthierG:geoolr} and other efficient
implementations~\cite{MackieIC:yalyal} of $\lambda$-calculus,
the basis of functional programming languages.  
Indeed, as a programming language, interaction nets have several attractive features:
\begin{itemize}
    \item A simple graph rewriting semantics;
    \item A complete symmetry between constructors and destructors;
    \item They are Turing complete;
    \item Reduction steps are local, lending them 
    to parallel execution without amending the algorithm being executed;
    \item Memory management, including garbage collection, is explicitly part of the execution, improving speed and, again, not requiring any changes in a parallel environment. 
\end{itemize}

When writing programs in interaction nets, it is useful to have some
extensions to the basic net structure, to
facilitate the process.
It is analogous to PCF~\cite{Plotkin77},
which is an extension of the pure typed $\lambda$-calculus
obtained by adding some constants.
For example, ``pure'' interaction nets do not have 
built-in constants,
data types or conditional branching. 
Data types, 
such as integers, were introduced in~\cite{FMP:Combining}, and
conditional rewriting rules on values were introduced in
\cite{DBLP:phd/ethos/Sato15}. 

Although net reduction rules
have a basic, depth-one
pattern matching, a nested version has been introduced in
\cite{DBLP:journals/entcs/HassanS08} as a conservative extension, \ie
although it provides a new feature to the programmer,
it can be implemented using pure interaction nets and
thus retains 
fundamental properties of interaction nets such as 
the one-step confluence property (defined in Section~\ref{sec:interaction-nets-graph}).

The aim of this paper is to introduce conditional nested pattern matching on values, as a further extension of \cite{DBLP:journals/entcs/HassanS08}, 
whilst preserving one-step confluence.
We show that this extension is conservative by introducing 
a translation that maps the nested pattern matching back to pure nets.
We also propose a new notation for the pattern matching.
This notation is similar to the well-known case expressions in many programming languages.

One of the goals of this work is to show how to represent
functional programs (and more generally term rewriting systems) 
that contain nested pattern matching in interaction nets.
This is part of on-going work to demonstrate a real-world functional programming language that can take full advantage of interaction nets' built-in parallelism.

This paper is structured as follows. In the next section we give a overview of interaction nets and the extension for values. In Section~\ref{sec:cnaps} we introduce the conditional nested pattern matching.
Section~\ref{sec:translation}
introduces a translation that removes the nested pattern matching
and shows that this extended 
matching is a conservative extension.
Then Section~\ref{sec:notation} proposes
a notation to easily express the matching and 
Section~\ref{sec:discussion} discusses some implementation issues.
The paper concludes in Section~\ref{sec:conclusion} with an outlook for the future.

\section{Background}\label{sec:background}

In this section we review the interaction net paradigm, 
and describe some known extensions to it.

\subsection{Interaction nets}\label{sec:interaction-nets-graph}
Interaction nets are graph rewriting systems~\cite{LafontY:intn}.
Each node has a name $\alpha$ and one or more \emph{ports} 
which can be connected to ports of other nodes.
The number of ports is determined by the node name $\alpha$
and is called the node's \emph{arity}, written as $\ar(\alpha)$.
When $\ar(\alpha) = n$, then the node $\alpha$ has $n+1$ ports:
$n$ \emph{auxiliary} ports and a distinguished one called the
\emph{principal} port, labelled with an arrow.
Nodes are drawn as follows:
\begin{center}
\includegraphics[scale=\normalscale,keepaspectratio,clip]{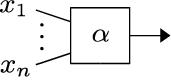}
\end{center}
We may put different labels on the ports to distinguish them.

We have a set $\Sigma$ of \emph{symbols}, which are the names of nodes.
A \emph{net} built on $\Sigma$ is an undirected graph:
the vertices are nodes, and the edges connect nodes at the ports. 
There is at most one edge at every port.
A port which is not connected is called a \emph{free} port.
A pair of nodes $(\alpha,\beta)\in \Sigma\times\Sigma$
connected via their principal ports forms an \emph{active} pair,
which is the interaction nets analogue of a redex pattern.
We refer to such a connected pair $(\alpha, \beta)$ as
$\alpha\inter\beta$.
A rule $((\alpha, \beta) \ito N)$ describes how to replace the pair $(\alpha, \beta)$
with the net $N$. 
$N$ can be any net as long as the set of free ports are preserved during the reduction.
The following diagram illustrates the idea:
\begin{center}
\includegraphics[scale=\normalscale,keepaspectratio,clip]{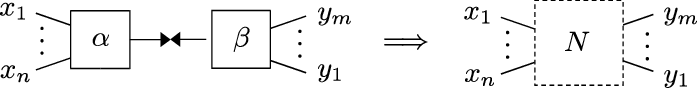}
\end{center}
There is at most one rule for each pair of nodes,
and the pairs are matched symmetrically, that is to say,
$(\beta, \alpha)$ is also replaced by $N$ by 
a rule $((\alpha, \beta) \ito N)$.
If the pair is symmetric, such as $(\alpha, \alpha)$, 
then $N$ must be symmetric.
We write $N_1 \rto N_2$
when a net $N_1$ is reduced to $N_2$ by a rule.
One of the most powerful properties of this graph
rewriting system is that it is one-step confluent;
if $N \rto N_1$ and $N \rto N_2$ $(N_1 \not= N_2)$,
then there exists a net $N'$ such that $N_1 \rto N'$ and $N _2 \rto N'$.
Therefore, if a normal form exists, it is uniquely determined and any reduction path to the normal form has the same number of steps.

Here,
as shown in Figure~\ref{fig:example-Mult-SZ},
we give an example of the interaction net system
for unary number multiplication, represented by
the following term rewriting system:
\begin{itemize}
    \item $\sym{Mult}(\sym{Z},y) = y$
    \item $\sym{Mult}(\sym{S}(x),y) = \sym{Add}(y,\sym{Mult}(x,y))$
\end{itemize}
This system has nodes $\sym{Z}$, $\sym{S}$, $\sym{Add}$ and $\sym{Mult}$
for arithmetic expressions and $\delta$, $\varepsilon$ for
duplication and elimination where
the $\alpha$ in the $\delta$ and $\epsilon$ rules
stands for either $\sym{Z}$ or $\sym{S}$.
This rewrites
$\sym{Mult}(\sym{S}(\sym{S}(\sym{Z})), \sym{S}(\sym{S}(\sym{S}(\sym{Z}))))$
to
$\sym{Add}(\sym{S}(\sym{S}(\sym{S}(\sym{Z}))), \sym{Add}(\sym{S}(\sym{S}(\sym{S}(\sym{Z}))), \sym{Z}))$.
Whilst in term rewriting systems
$\sym{Mult}(\sym{S}(x), A)$ can be rewritten as
$\sym{Add}(A, \sym{Mult}(x, A))$ even if the $A$ has redexes,
in this system the duplications are performed only for nets
that have no active pairs, \ie nets that are built from $\sym{S}$ and $\sym{Z}$.
We use $\delta$ to explicitly duplicate terms and $\epsilon$ to remove them when not needed in the result---an example of the explicit memory management.
In addition, evaluated results in the interaction net can be quickly used
for other computations without waiting for the whole computation to finish, \ie as a \emph{wait-free} algorithm.
This is called a \emph{stream operation}~\cite{Mackie-Sato-parallel},
and Figure~\ref{fig:example-Mult-SZ} shows
that in the second rewrite step,
the $\sym{S}$ duplicated by $\delta\inter\sym{S}$
can be used with $\sym{Add}$.

\begin{figure}[ht]
\begin{center}
  \includegraphics[scale=\smallscale,keepaspectratio,clip]
                {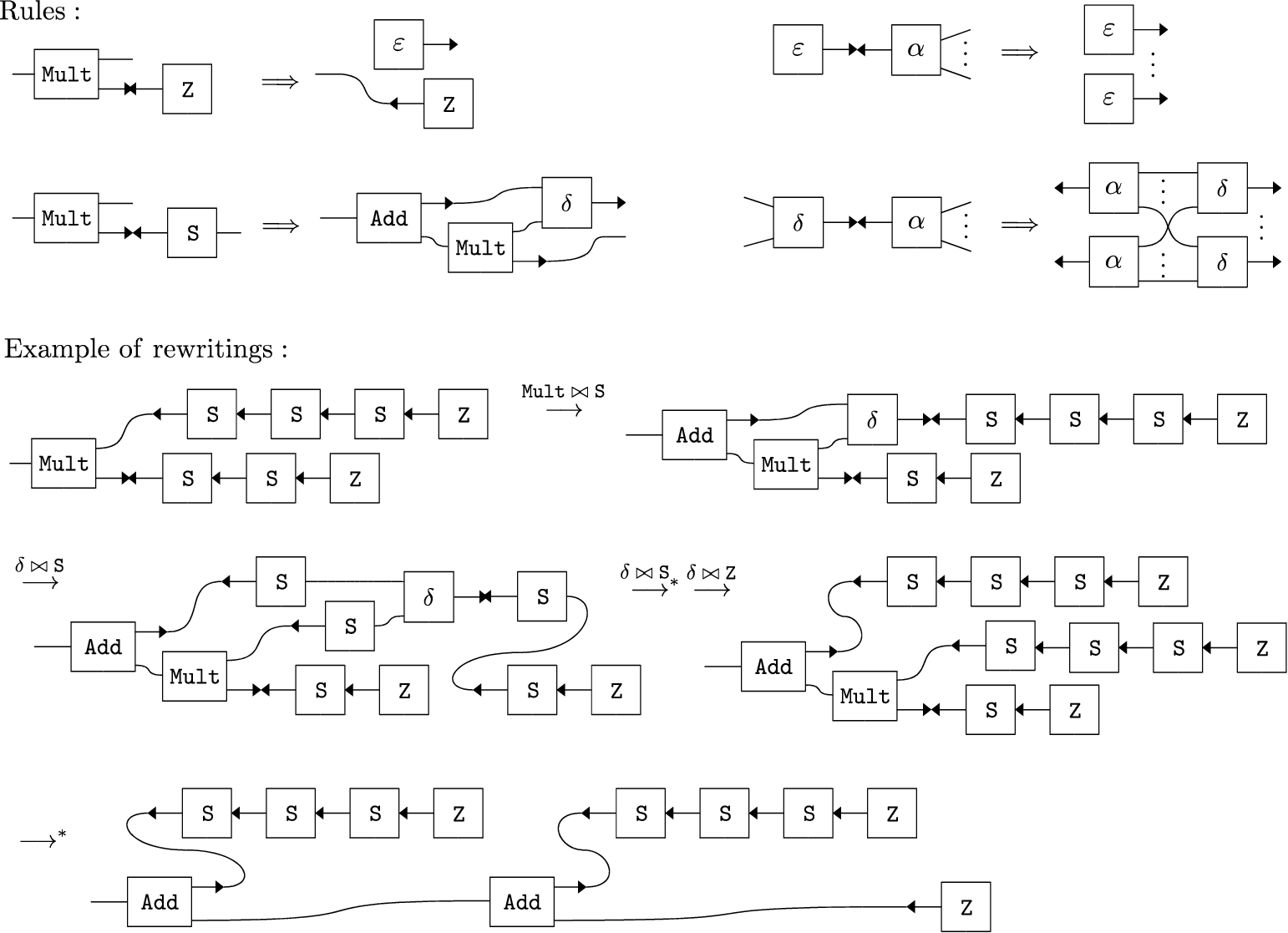}
\end{center}
\caption{An example of rules and rewritings of interaction nets}
\label{fig:example-Mult-SZ}
\end{figure}

We call the system \emph{pure} to mean that no extensions are applied.
From here, we show some extensions to the pure interaction nets system.

\subsection{Attributes}
Nodes can be extended to contain additional 
information~\cite{FMP:Combining,DBLP:phd/ethos/Sato15},
and here we review an extension called \emph{attributes}~\cite{DBLP:phd/ethos/Sato15}.
This extension is considered as an analogue of PCF obtained by adding some constants to the pure typed $\lambda$-calculus.
Attribute values are integers and are written in parentheses
following agent symbols.
For instance, $\alpha(2,4)$ is a node
which holds $2$ and $4$ as attribute values,
and is drawn as follows:
\begin{center}
\includegraphics[scale=\smallscale,keepaspectratio,clip]{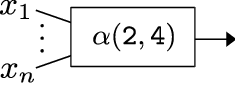}
\end{center}

\noindent
Integers and their lists are represented
using built-in nodes $\sym{Int}(i)$, $\sym{Cons}(j)$ and $\sym{Nil}$
where $i,j$ are integer numbers, and are drawn as follows:
\begin{center}
\includegraphics[scale=\smallscale,keepaspectratio,clip]{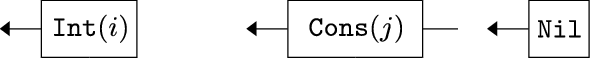}
\end{center}
To simplify the diagram,
the nodes $\sym{Int}(i)$ and $\sym{Cons}(j)$
are often drawn without the symbols when no confusion will happen.
For instance, an integer $\sym{1}$ 
and a list of $\sym{2}, \sym{4}, \sym{3}$ are written
in the following way:
\begin{center}
\includegraphics[scale=\smallscale,keepaspectratio,clip]{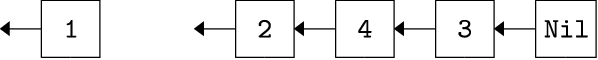}
\end{center}

Rewriting rules are also extended to deal with attributes
by considering a node name with attribute values as one symbol.
For instance, the name of the node $\alpha(\sym{2},\sym{4})$,
which holds attribute values $\sym{2}$ and $\sym{4}$,
is considered as a symbol ``$\alpha(\sym{2},\sym{4})$''.
We soon see that we need to add extra power to this system.
For example, the increment operator for integer numbers can be defined
by the following rules
between a node $\sym{Inc}$ and each integer number:
\begin{center}
\includegraphics[scale=\smallscale,keepaspectratio,clip]{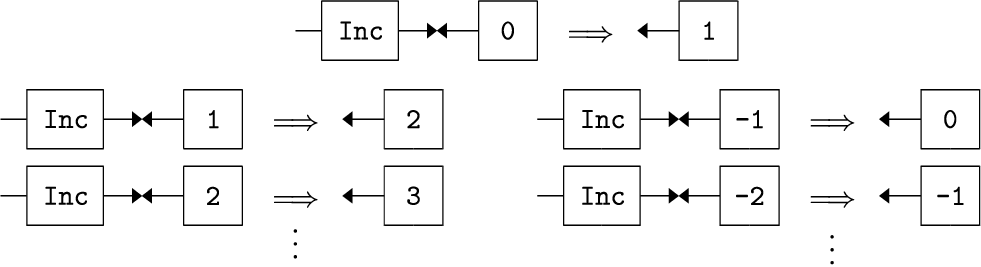}
\end{center}
This demonstrates an obvious problem in practical using---if we want to define the increment for any integer, we need not only an infinite set of symbols such as $\sym{Int}(\sym{0})$, $\sym{Int}(\sym{1})$, $\sym{Int}(\sym{-1})$, $\sym{Int}(\sym{2})$, $\ldots$, but also an infinite number of rules.
To deal with these in finite schemes,
we introduce meta variables $v$ for them, 
called \emph{attribute variables} and
\emph{expressions} $e$ on attribute variables as follows:

$e ::= v \, \mid \,
i \, \mid \,
\sym{-}e \, \mid \,
\sym{not} \ e  \, \mid \,
e \ \ \mathit{op} \ \ e
\, \mid \, (e)$

\noindent where $v$ is an attribute variable,
$i$ is an integer number, $e$ is an expression,
and $\mathit{op}$ is defined as follows:

$\mathit{op} ::= \sym{+} \, \mid \,
\sym{-} \, \mid \,
\sym{*} \, \mid \,
\sym{/} \, \mid \,
\sym{mod} \, \mid \,
\sym{==} \, \mid \,
\sym{!=} \, \mid \,
\sym{<} \, \mid \,
\sym{<=} \, \mid \,
\sym{>} \, \mid \,
\sym{=>} \, \mid \,
\sym{and} \, \mid \,
\sym{or}$.

\noindent{}The definition of expressions may be extended as long as the computation is decidable. 
We will abbreviate in the following the
left- and right-hand sides of a rule by LHS and RHS, respectively.
Attribute variables can be placed on the LHS nodes,
expressions on the variables can be placed on the RHS nodes, just like attributes.
The LHS nodes do not have the same attribute variables to ensure that each variable matches any attribute value; for example, $\alpha(i,i)$ is not allowed as the name of the LHS node because it has the same variable as $i$, whereas $\alpha(i,j)$ is allowed.
Now the increment operator is represented
as the following one rule, in which an attribute value $v$
is replaced by a value obtained by executing $v\texttt{+}1$:
\begin{center}
\includegraphics[scale=\smallscale,keepaspectratio,clip]{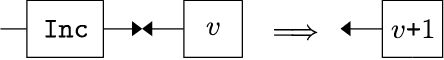}
\end{center}
Addition of integers on two node attributes is represented
as the following two rules:
\begin{center}
\includegraphics[scale=\smallscale,keepaspectratio,clip]{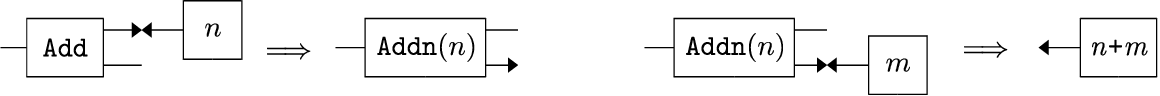}
\end{center}

\subsubsection{Conditional rules on attributes}
\emph{Conditional rules} were proposed in \cite{DBLP:phd/ethos/Sato15},
and here we introduce a refined version.
We write conditional rules as
$((\alpha, \beta) \ \mathrm{if}\ \mathit{c}\ {\ito} \ N)$,
where the $\mathit{c}$ is 
%
an expression on
attribute variables of $\alpha$, $\beta$.
The $\mathit{c}$ is called the \emph{conditional expression} of a rule,
and the rule becomes available to be applied
if $\mathit{c}$ is calculated to true,
where false is represented by 0 and true by any other value.
We also refer to the active pair with a conditional expression
as $(\alpha \inter \beta \ \mathrm{if}\ \mathit{c})$,
and call it a \emph{conditional} active pair.
The rule is drawn as follows:
\begin{center}
\includegraphics[scale=\normalscale,keepaspectratio,clip]{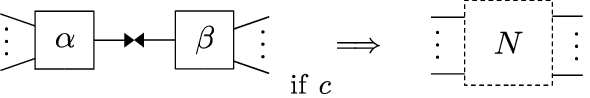}
\end{center}

\noindent
There may exist several conditional rules for the same active pair, 
but they must be \emph{disjoint},
\ie only at most one condition among the rules must be true.
This can be realised by introducing a decidable evaluation strategy for the conditions.
We use $\mathtt{true}$ and $\mathtt{false}$
that are evaluated as true and false, respectively.
In addition, as a refined version, we introduce a special conditional rule
$((\alpha, \beta) \ \mathrm{if} \ \mathit{otherwise} \ \ito N)$
which becomes available if any other conditional expressions
for the $\alpha\inter\beta$ become false.
We may write ``$\mathrm{if}\ \mathit{otherwise}$'' as just
``$\mathit{otherwise}$'' and omit ``$\mathrm{if}\ \mathtt{true}$''
when no confusion will arise.

For instance, a function $\sym{sumup}$
that takes a natural number
and computes the sum up to the number
can be realised by the following conditional rules:

\noindent
\begin{center}
\begin{minipage}[b]{5.0cm}
\begin{small}
\begin{program}
sumup 0 = 0
sumup n = n + sumup (n-1)
\end{program}
\end{small}
\end{minipage}
\qquad
\begin{minipage}{9.5cm}
\includegraphics[scale=\smallscale,keepaspectratio,clip]
                {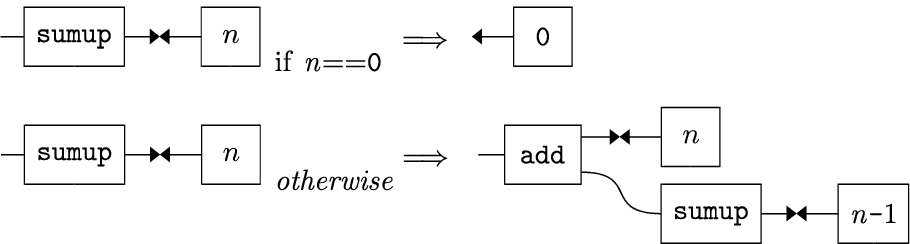}
\end{minipage}
\end{center}

Conditional rules are disjoint for the same active pairs,
therefore the one-step confluence property is preserved.

\section{Conditional Nested Pattern Matching}\label{sec:cnaps}
In this section we introduce conditional nested pattern matching
on attributes, as an extension of the nested matching
introduced in~\cite{DBLP:journals/entcs/HassanS08},
so that we can put a condition on each nested pattern.
We write \textsc{nap} as an abbreviation of ``nested active pair''.

Attribute values can be considered as parts of symbols,
so nested pattern matching can also manage attributes values.
Here, we extend the \textsc{nap} to conditional ones.
For simplicity, we consider all rules $((\alpha, \beta) \ito N)$
in the pure system as conditional rules with the $\mathtt{true}$ expression
such as $((\alpha, \beta) \ \mathrm{if}\ \mathtt{true}\ \ito N)$.

\begin{definition}[Conditional \textsc{nap}s]\label{def:conditional-naps}
  \emph{Conditional \textsc{nap}s} $\nested{P_\mathit{if}}$ are
  inductively defined as follows:
  
\begin{description}
\item[Base:] Every conditional active pair
  $(\alpha\inter\beta \ \mathrm{if}\ \mathit{c})$ is
  a conditional \textsc{nap}. 
  We write this as:
  \[\nested{\alpha(\vec{x})\inter\beta(\vec{y}) \ \mathrm{if}\ \mathit{c}}\]
  where $\vec{x}, \vec{y}$ are distinct names,
  corresponding to the occurrences of the auxiliary ports
  of $\alpha$ and $\beta$, respectively.

\item[Step:] We assume that $\nested{P_\mathit{if}}$ is
  a conditional \textsc{nap},
  $\gamma$ is an agent, $c$ is an expression
  on attribute variables of agents occuring in $\nested{P_\mathit{if}}$
  and a free port in $\nested{P_\mathit{if}}$ has a name $z$.
  A net obtained by connecting the principal port of $\gamma$
  with the expression $c$ to the free port $z$
  is also a conditional \textsc{nap}.
  \begin{center}
    \includegraphics[scale=\smallscale,keepaspectratio,clip]{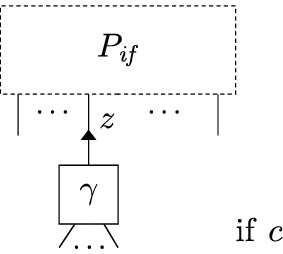}
  \end{center}
  We write this as:
  \[\nested{P_\mathit{if}, \ z-\gamma(\vec{w}) \ \mathrm{if}\ \mathit{c}}\]
  where $\vec{w}$ are distinct names and fresh for $\nested{P_\mathit{if}}$, corresponding to the occurrences of the auxiliary ports of $\gamma$.
  
  We call the form ``$z-\gamma(\vec{w}) \ \mathrm{if}\ \mathit{c}$''
  a \emph{connection} in a conditional \textsc{nap}. We use $u$ to range over connections.
\end{description}
\end{definition}

Expressions that occur
as ``$\mathrm{if}\ \mathit{c}$'' in conditional \textsc{nap}s are also called
\emph{conditional} expressions.
If all conditional expressions in a conditional \textsc{nap} evaluate to true,
the \textsc{nap} is called \emph{available}.

We build just a \textsc{nap} from a conditional \textsc{nap} by removing all conditional expressions.
For a conditional \textsc{nap} $\nested{P_\mathit{if}}$,
we write the (non-conditional) $\nested{P}$ 
as the \emph{condition dropped} \textsc{nap} for $\nested{P_\mathit{if}}$.
A rewriting rule on a conditional \textsc{nap}
$(\nested{P_\mathit{if}} \ito N)$ replaces
a matched net by the condition dropped $\nested{P}$ to the net $N$
if $\nested{P_\mathit{if}}$ is available.
Nested nets are also symmetrically matched and replaced.

\begin{example}\label{example:gcd}
We take the following program \texttt{gcd'}
which obtains the greatest common divisor of two given natural numbers:
\begin{program}
gcd' (a,b) = if b==0 then a
             else gcd' (b, a `mod` b)
\end{program}
This is written as the following
rewriting rules on conditional \textsc{nap}s:
\begin{center}
\includegraphics[scale=\smallscale,keepaspectratio,clip]{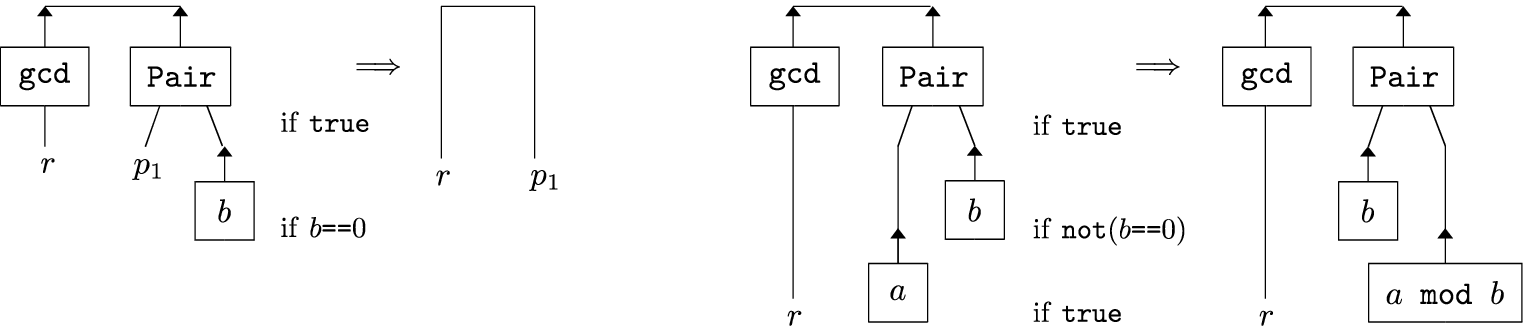}
\end{center}

\noindent{}
The following is an example of rewritings for \texttt{gcd'(21,14)}:
\begin{center}
\includegraphics[scale=\smallscale,keepaspectratio,clip]{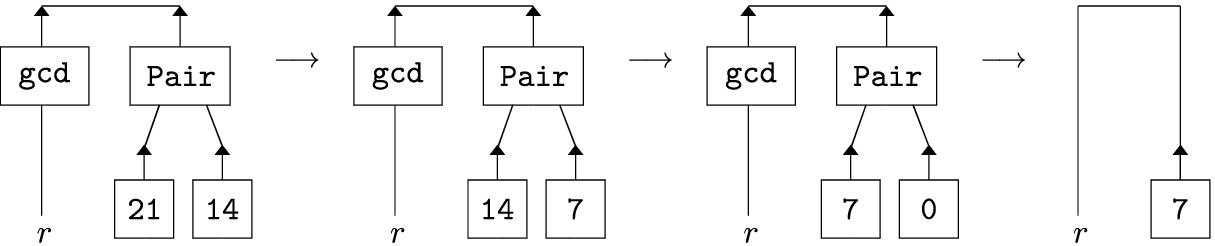}
\end{center}

\noindent{}
Of course, we can realise the same computation for the \texttt{gcd'} without
conditional \textsc{nap}s, but this requires several additional rules 
that obscure the overall meaning:

\begin{center}
\includegraphics[scale=\smallscale,keepaspectratio,clip]{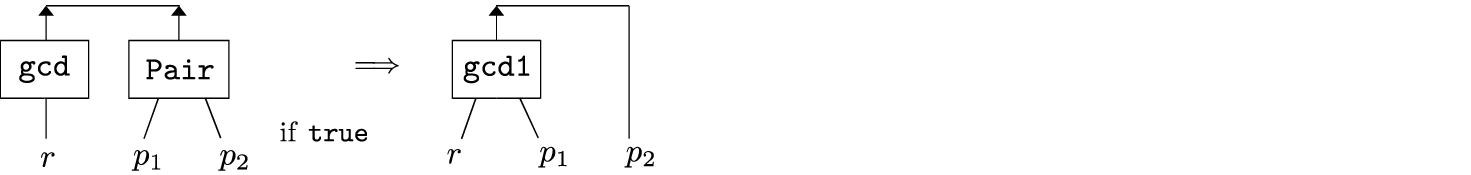}
\end{center}
\begin{center}
\includegraphics[scale=\smallscale,keepaspectratio,clip]{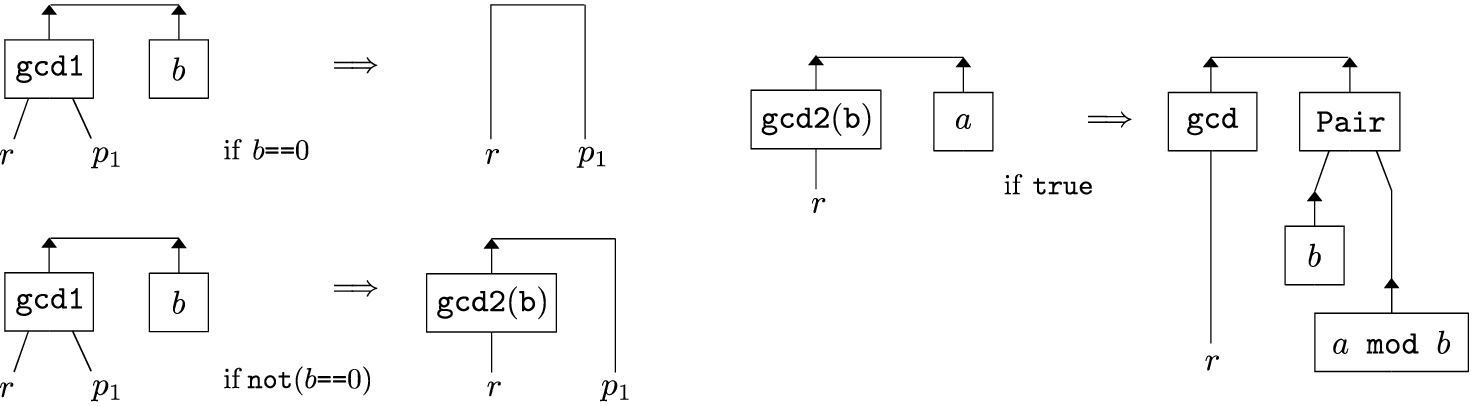}
\end{center}
\end{example}

\begin{example}\label{example:por}
Consider the following non-confluent map: \verb|f(0,y)=0|, \verb|f(x,0)=1|.

\noindent
This map is encoded as the following rules
where the agent \texttt{del} deletes built-in nodes $\sym{Int}(i)$:
\begin{center}
\includegraphics[scale=\smallscale,keepaspectratio,clip]{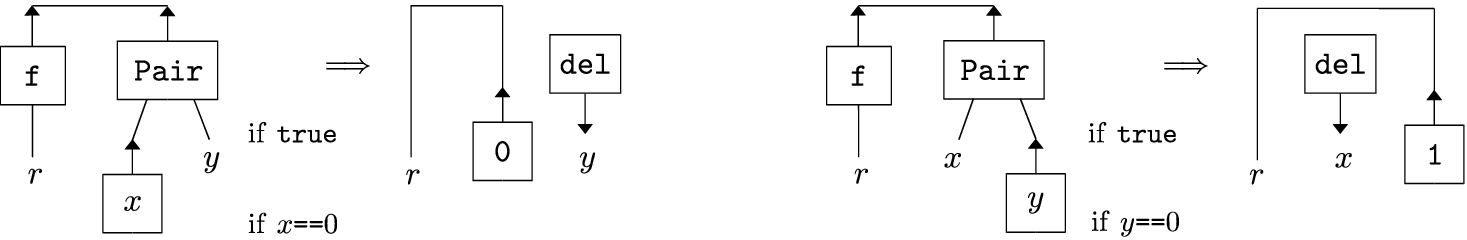}
\end{center}
As shown below, the two rewriting paths
from \texttt{f(0,0)} are not confluent:
\begin{center}
\includegraphics[scale=\smallscale,keepaspectratio,clip]{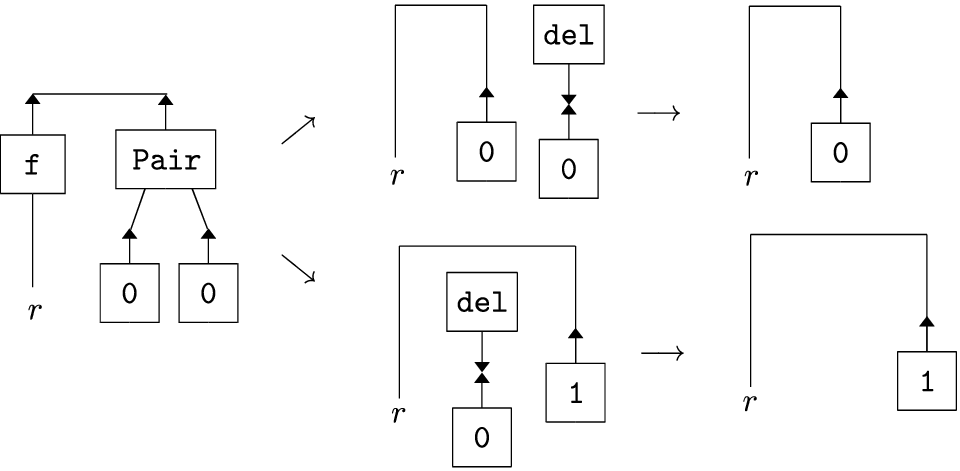}
\end{center}
  
\end{example}

To preserve the one-step confluence property, 
it is sufficient to introduce \emph{local sequentiality}~\cite{LafontY:intn},
\ie which port is looked first.
We extend several properties in \cite{DBLP:journals/entcs/HassanS08}.
First, we instantiate the sequentiality to sets of conditional \textsc{nap}s:

\begin{definition}[Local sequentiality]\label{definition:sequentiality-of-cnap}
  Let $\mathcal{P_\mathit{if}}$ be a set of conditional \textsc{nap}s.
  $\mathcal{P_\mathit{if}}$ is \emph{local sequential} if:

\begin{description}
\item[(1)] If $\nested{\alpha(\vec{x})\inter\beta(\vec{y}) \ \mathrm{if}\ \mathit{c}} \in \mathcal{P_\mathit{if}}$:
  \begin{description}
  \item[(1a)] All conditional expressions $c_k$ such that 
  $\nested{\alpha(\vec{x})\inter\beta(\vec{y}) \ \mathrm{if}\ \mathit{c_k}} \in \mathcal{P_\mathit{if}}$
    are disjoint, \ie only at most one conditional expression must be true.
  \end{description}
\item[(2)] If $\nested{P_\mathit{if}, \ z-\gamma(\vec{w}) \ \mathrm{if}\ \mathit{c}} \in \mathcal{P_\mathit{if}}$:
  \begin{description}
  \item[(2a)] All conditional expressions $c_k$ such that 
  $\nested{P_\mathit{if}, \ z-\gamma(\vec{w}) \ \mathrm{if}\ \mathit{c_k}} \in \mathcal{P_\mathit{if}}$
    are disjoint,
  \item[(2b)] $\nested{P_\mathit{if}} \in \mathcal{P_\mathit{if}}$,
  \item[(2c)]
    $\nested{P_\mathit{if}, \ z_j-\gamma_j(\vec{w_j}) \ \mathrm{if}\ \mathit{c_j}} \not\in \mathcal{P_\mathit{if}}$
    for any free port $z_j$ in the $\nested{P_\mathit{if}}$ except the $z$,
    where $\gamma_j$ is an agent,  $c_j$ is a conditional expression. 
  \end{description}
\end{description}
\end{definition}

We may write ``local sequential'' as just ``sequential''
when no confusion will arise.
If a local sequential set has an element
$\nested{\alpha(\vec{x})\inter\beta(\vec{y}) \ \mathrm{if}\ \mathit{c},\
  \vec{u}, \ z-\gamma(\vec{w}) \ \mathrm{if}\ \mathit{c}}$ 
  where $\vec{u}$ is a sequence of connections,
then every element started from the origin pair $\alpha\inter\beta$
is along the path to the $z$.
For clarity, we draw triangles on auxiliary ports connected to agents,
and horizontal dotted lines for the sequential order.
As an example, we depict a \textsc{nap} which is textually represented
$\nested{\alpha(\vec{x})\inter\beta(\vec{y}) \ \mathrm{if}\ \mathit{c}_1,
  \ z_k-\kappa(\vec{w}) \ \mathrm{if}\ \mathit{c}_2, 
  \ z_1-\gamma(\vec{z}) \ \mathrm{if}\ \mathit{c}_3}$ as follows:
\begin{center}
\vspace{-2mm}
\includegraphics[scale=\smallscale,keepaspectratio,clip]{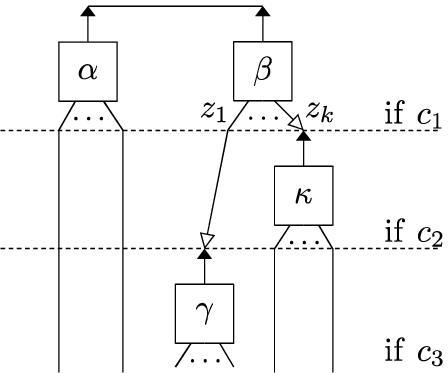}
\end{center}

\begin{example}\label{example:lastElt}
The following program \texttt{lastElt}
returns the last element of a given non-empty list:
\begin{program}
lastElt ([x]) = x
lastElt (x:y:ys) = lastElt (y:ys)
\end{program}
This is written as the following
rewriting rules on (non-conditional) \textsc{nap}s
where $\texttt{del}$ erases any agent:
\begin{center}
\includegraphics[scale=\tinyscale,keepaspectratio,clip]{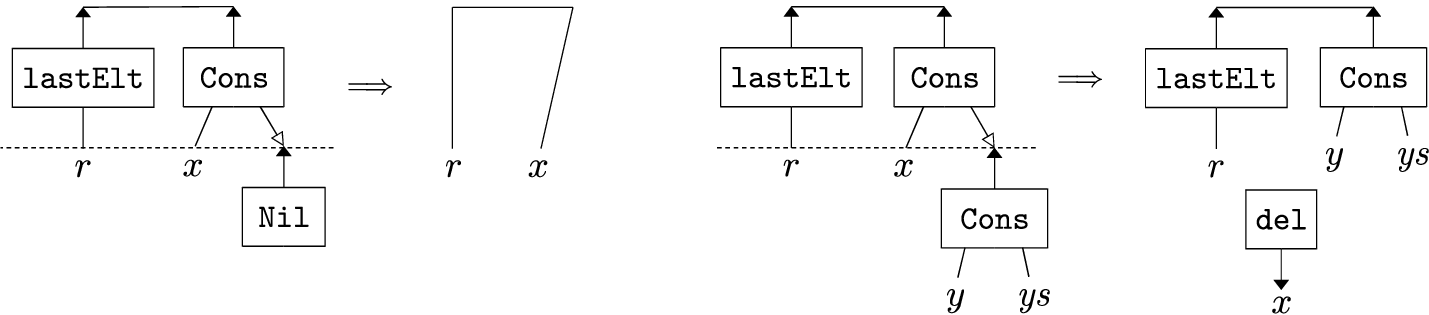}
\end{center}
These are written as follows (omitting ``$\mathrm{if} \ \mathtt{true}$'') and the set of the LHS nodes of the rules is sequential:
\begin{itemize}
\item $(\nested{\texttt{lastElt}(r)\inter\texttt{Cons}(x, \mathit{xs}),
  \ \ \mathit{xs} - \mathtt{Nil}}
  \ito N_1)$, 
\item $(\nested{\texttt{lastElt}(r)\inter\texttt{Cons}(x,\mathit{xs}),
  \ \ \mathit{xs} - \mathtt{Cons}(y, \mathit{ys})} \ito N_2)$.
\end{itemize}
\end{example}

\begin{example}
The rewriting rules in Example~\ref{example:gcd}
are textually written and graphically drawn as follows
and the set of the LHS nodes of the rules is also sequential:
\begin{itemize}
\item $(\nested{\texttt{gcd}(r)\inter\texttt{Pair}(p_1, p_2) \ \mathrm{if} \ \mathtt{true},
  \ \ p_2 - \mathtt{Int}(b) \ \mathrm{if} \ b \sym{==} 0}
  \ito N_1)$,
\item $(\nested{\texttt{gcd}(r)\inter\texttt{Pair}(p_1,p_2) \ \mathrm{if} \ \mathtt{true},
  \ \ p_2 - \mathtt{Int}(b) \ \mathrm{if} \ \sym{not}(b \sym{==} 0),
  \ \ p_1 - \mathtt{Int}(a) \ \mathrm{if} \ \mathtt{true}} \ito N_2)$.
\end{itemize}
\begin{center}
\includegraphics[scale=\tinyscale,keepaspectratio,clip]{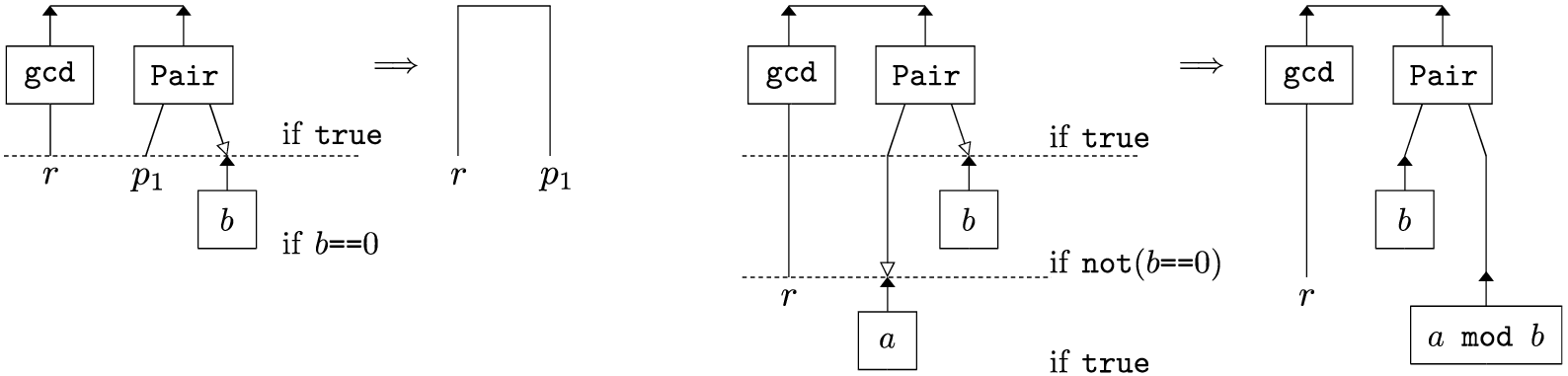}
\end{center}
\end{example}

Next, we extend well-formed rule sets~\cite{DBLP:journals/entcs/HassanS08}
as sets of pairwise distinct rules on conditional \textsc{nap}s:

\begin{definition}[Subnets of conditional \textsc{nap}s]
  Let $\nested{P_\mathit{if}}$ and $\nested{P'_\mathit{if}}$ be
  conditional \textsc{nap}s.
  $\nested{P_\mathit{if}}$ is a \emph{subnet} of $\nested{P'_\mathit{if}}$ if:
  \begin{itemize}
  \item $\nested{P}$ is a subnet of $\nested{P'}$,
    where $\nested{P}$ and $\nested{P'}$ are condition dropped \textsc{nap}s for
    $\nested{P_\mathit{if}}$ and $\nested{P'_\mathit{if}}$, respectively.
  \item $\nested{P_\mathit{if}}$ is available if $\nested{P'_\mathit{if}}$ is available.
  \end{itemize}
\end{definition}

\begin{definition}[Sets of pairwise distinct rules]\label{def:well-formed-condition-set}
  A set of \emph{pairwise distinct} rules on conditional \textsc{nap}s
  $\mathcal{R_\mathit{if}}$ is defined if:
\begin{itemize}
\item There is a local sequential set which contains
  every conditional \textsc{nap}
  $\nested{P_\mathit{if}}$ such that
  $(\nested{P_\mathit{if}} \ito N) \in\mathcal{R_\mathit{if}}$,
\item For any $(\nested{P_\mathit{if}} \ito N) \in \mathcal{R_\mathit{if}}$, \
  $\mathcal{R_\mathit{if}}$ has no
  $(\nested{P'_\mathit{if}} \ito N')$
  where
  $\nested{P'_\mathit{if}}$ is a subnet of $\nested{P_\mathit{if}}$.
\end{itemize}
\end{definition}
\noindent{}We call a rule set \emph{pairwise distinct} if the set satisfies conditions in Definition~\ref{def:well-formed-condition-set}.
We also call rules \emph{pairwise distinct}
if there is a set of pairwise distinct rules containing them.

As long as a rule set is pairwise distinct,
one-step confluence property is preserved as follows:

\begin{proposition}[One-step confluent]
  When a given set of rules on conditional \textsc{nap}s
  $\mathcal{R_\mathit{if}}$ is pairwise distinct,
  then all reductions using rules in $\mathcal{R_\mathit{if}}$
  are confluent in one step.
\end{proposition}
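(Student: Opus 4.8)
The plan is to reduce the one-step confluence of conditional \textsc{nap} rewriting to node-disjointness of the matched subnets, mirroring the argument for pure interaction nets recalled in Section~\ref{sec:interaction-nets-graph} and extending the corresponding result of~\cite{DBLP:journals/entcs/HassanS08} to the conditional setting. Suppose $N \rto N_1$ and $N \rto N_2$ via rules $(\nested{P^1_\mathit{if}} \ito N'_1)$ and $(\nested{P^2_\mathit{if}} \ito N'_2)$ in $\mathcal{R_\mathit{if}}$, applied to matched subnets $M_1$ and $M_2$ of $N$; I must produce a common reduct, and it suffices to treat the case $N_1 \neq N_2$. The whole argument turns on one structural observation about a matched \textsc{nap}: by Definition~\ref{def:conditional-naps}, the two agents of the base form an active pair of $M_i$, while every agent added by a Step attaches its \emph{principal} port to an \emph{auxiliary} port of a shallower agent. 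Consequently $M_i$ contains exactly one active pair, namely its base, and each non-base agent points, through its single principal port, back towards that base.

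First I would dispose of the case where $M_1$ and $M_2$ share the same base active pair $\alpha \inter \beta$. Local sequentiality (Definition~\ref{definition:sequentiality-of-cnap}) makes matching at a fixed active pair deterministic: the port inspected at each depth is unique by (2c), the agent occupying it in $N$ is fixed, and the competing conditional expressions at each depth are disjoint by (1a)/(2a), so at every level at most one branch is available. Hence the \textsc{nap}s of rules that are simultaneously matchable and available at $\alpha \inter \beta$ lie along a single root-to-node path, and the condition dropped shorter ones are subnets of the longer ones (availability of the longer forcing availability of the shorter, since their shared connections carry the same, true, conditions). Since Definition~\ref{def:well-formed-condition-set} forbids one rule's \textsc{nap} from being a subnet of another's, at most one rule can fire and its match is uniquely determined, so $N_1 = N_2$, contradicting $N_1 \neq N_2$. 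Two distinct reductions must therefore use distinct base active pairs.

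It then remains to show that when the base pairs $AP_1 \neq AP_2$ differ, $M_1$ and $M_2$ are node-disjoint. I would argue by contradiction, taking a shared agent $g$ and analysing its role in each $M_i$. If $g$ is a base agent of both, its single principal port would have to connect to the partner of $AP_1$ and to the partner of $AP_2$, forcing $AP_1 = AP_2$. If $g$ is a base agent of $M_1$ but not of $M_2$, its principal port is matched both against a principal port (its base partner) and, by the Step clause, against an auxiliary port of its $M_2$-parent --- impossible, since a port has a single neighbour. In the remaining case, $g$ non-base in both, the two parents of $g$ coincide (again because $g$ has one principal port), so the shared agent can be pushed one step towards the base; as the depth in $M_1$ strictly decreases, this descent terminates at a base agent of $M_1$, returning us to the two contradictory cases above. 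Therefore $M_1 \cap M_2 = \emptyset$; the two rewrites act on disjoint regions, and since disjointness leaves the agents and attribute values read by the other match untouched, its availability is preserved, so each rewrite can still be performed after the other, yielding a common $N'$ with $N_1 \rto N'$ and $N_2 \rto N'$.

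I expect the disjointness step --- the differing-base case --- to be the crux: the bookkeeping of principal-versus-auxiliary neighbours and the well-founded descent towards the base is what rules out overlap of nested matches, and it is precisely where the one-neighbour-per-port discipline of interaction nets does the real work. The same-base case is comparatively routine once local sequentiality and the subnet condition of Definition~\ref{def:well-formed-condition-set} are in hand, since together they make the matching deterministic; the only care needed there is to confirm that the conditional expressions along a shared prefix coincide, so that the subnet relation on condition dropped \textsc{nap}s indeed lifts to the conditional ones.
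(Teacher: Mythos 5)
Your proof is correct and takes essentially the same approach as the paper's: the overlapping case is resolved exactly as there, using disjointness of conditional expressions from local sequentiality when the condition dropped \textsc{nap}s coincide, and the subnet prohibition of Definition~\ref{def:well-formed-condition-set} when one is a subnet of the other. The only difference is that you make explicit the node-disjointness of matches with distinct base active pairs (via the one-principal-port descent argument), a step the paper leaves implicit as the standard locality argument for interaction nets.
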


\begin{proof}
  We assume that $\mathcal{R_\mathit{if}}$ has two rules
  $(\nested{P_\mathit{if}} \ito N)$ and
  $(\nested{P'_\mathit{if}} \ito N')$
  which rewrite the same \textsc{nap} into different nets
  $N$ and $N'$.
  Let $\nested{P}$ and $\nested{P'}$ be condition dropped \textsc{nap}s
  for $\nested{P_\mathit{if}}$ and $\nested{P'_\mathit{if}}$, respectively.
  
  First, we suppose that
  $\nested{P}$ and $\nested{P'}$ are the same.
  If both $\nested{P_\mathit{if}}$ and $\nested{P'_\mathit{if}}$ are
  available,
  then conditional expressions
  in $\nested{P_\mathit{if}}$ and $\nested{P'_\mathit{if}}$
  are not disjoint, and this contradicts that 
  $\nested{P_\mathit{if}}$ and $\nested{P'_\mathit{if}}$
  are elements of the same sequential set.
  Otherwise,
  at most only one rule can be applied to the $\nested{P}$,
  and thus no critical pairs will be created by using both rules.
  
  Next, we suppose that
  the $\nested{P}$ is a subnet of $\nested{P'}$.
  If both $\nested{P_\mathit{if}}$ and $\nested{P'_\mathit{if}}$ are
  available,
  then $\nested{P_\mathit{if}}$ is a subnet of $\nested{P'_\mathit{if}}$,
  and it contradicts that
  the $\mathcal{R_\mathit{if}}$ is well-formed.
  Otherwise, there is at most only one rule
  that can be applied to both $\nested{P}$ and $\nested{P'}$.
  As a result, there are no critical pairs by using both rules.
\end{proof}

\section{Translation of nested conditional rules into non-nested ones}\label{sec:translation}
In this section we introduce a translation
$\mathbf{T}$ of a rewriting rule on a conditional \textsc{nap} 
$(\nested{P_\mathit{if}} \ito N)$
into non-nested conditional rules and show properties of the translation $\mathbf{T}$.
Generally nested pattern matching can be unfolded by introducing fresh symbols into less nested ones~\cite{10.1007/BFb0026099}.
In \textsc{nap}s each nested agent is inductively connected, 
and thus 
by considering such induction steps as the local sequential order,
we realise the nested matching 
as non-nested ones by introducing fresh agents.

\begin{definition}
We define a translation $\mathbf{T}$ of a rewriting rule on a conditional \textsc{nap} 
$(\nested{P_\mathit{if}} \ito N)$
into non-nested conditional rules
by structural induction on conditional \textsc{nap}s:

\begin{description}
\item[Base case:] If the $\nested{P_\mathit{if}}$ is
  $\nested{\alpha(\vec{x})\inter\beta(\vec{y}) \ \mathrm{if}\ \mathit{c}}$,
  then the translation $\mathbf{T}$ works as the identity function.

\item[Step case:] If the $\nested{P_\mathit{if}}$ is 
  $\nested{\alpha(\vec{x})
  \inter
  \beta(\vec{y}) \ \mathrm{if}\ \mathit{c}, \ z-\gamma(\vec{w}) \ \mathrm{if}\ \mathit{c}_z,\ \vec{u}}$
  where $z$ is an auxiliary port in $\alpha(\vec{x})\inter\beta(\vec{y})$
  and $\vec{u}$ is a sequence of connections
  such as 
  ``$z_1-\gamma(\vec{w_1}) \ \mathrm{if}\ \mathit{c}_1,\ 
  z_2-\gamma(\vec{w_2}) \ \mathrm{if}\ \mathit{c}_2, \ \ldots$''.

  To draw graphs simply, we suppose that $\vec{y}$ is $y_1, \ldots, y_m, y_{m+1}$
  and the $z$ is $y_{m+1}$.
  Other cases can be defined in a similar way.
  We also write the $\alpha$ and $\beta$ as
  $\alpha'(\vec{a})$ and $\beta'(\vec{b})$, respectively,
  in order to show the attribute variables explicitly.

  The $\mathbf{T}$ generates the following rules:
  \begin{itemize}
  \item $(\alpha'(\vec{a})(\vec{x})\inter\beta'(\vec{b})(\vec{y}) \ \mathrm{if}\ \mathit{c} \ito M)$
    where $M$ is a net obtained by connecting the $y_{m+1}$ to
    the principal port of an agent 
    $\kappa(\vec{a},\vec{b})(\vec{x},y_1, \ldots, y_m)$
    and the $\kappa$ is fresh for $\Sigma$ and the previous occurring symbols
    and depending uniquely on
    the symbols $\alpha', \beta'$ and the conditional expression $c$,

  \item $\mathbf{T}[(\nested{\kappa(\vec{a},\vec{b})(\vec{x}, y_1,\ldots,y_m)
    \inter \gamma(\vec{w}) \ \mathrm{if}\ \mathit{c}_z,\ \vec{u}}
    \ito N)]$.
    \smallskip

  \bigskip
  \includegraphics[scale=\smallscale,keepaspectratio,clip]{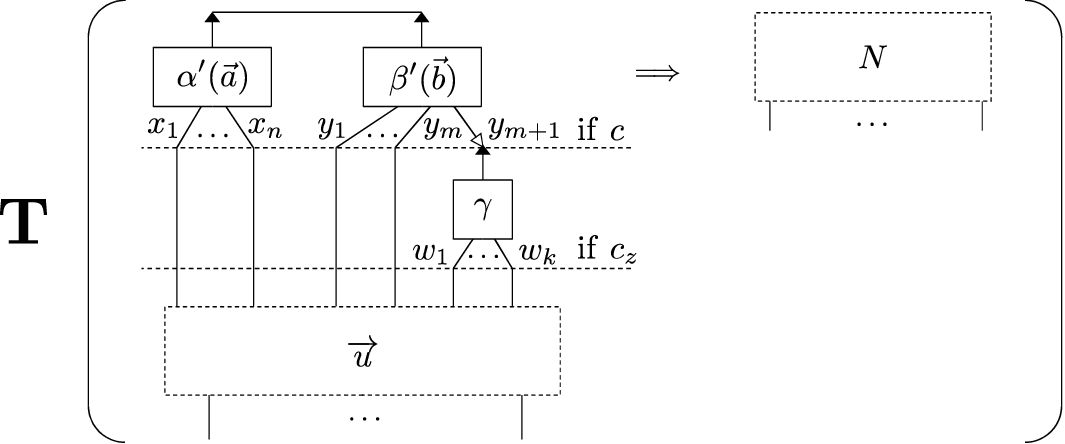}

  \bigskip

  \includegraphics[scale=\smallscale,keepaspectratio,clip]{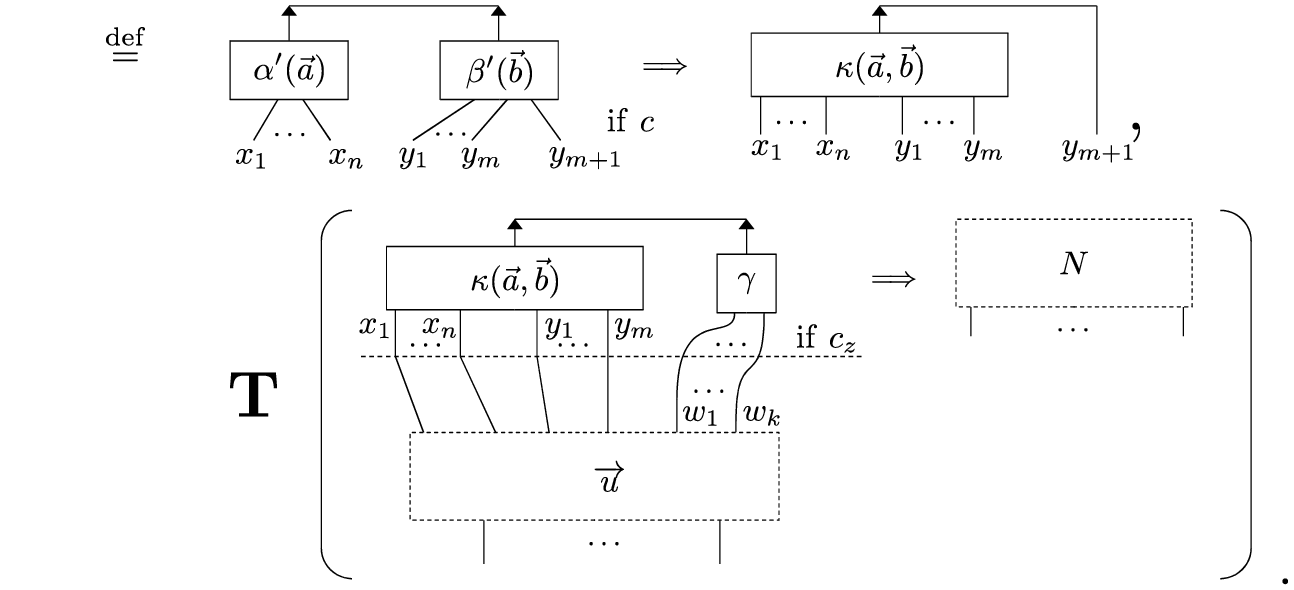}
  \end{itemize}

\end{description}
\end{definition}

\begin{example}\label{example:trans-gcd}
We show the translation of the rules for
$\texttt{gcd}\inter\texttt{Pair}$ in Example~\ref{example:gcd}.
First, we take the following rule:
$(\nested{\texttt{gcd}(r)\inter\texttt{Pair}(p_1,p_2) \ \mathrm{if} \ \mathtt{true},
  \ \ p_2 - \mathtt{Int}(b) \ \mathrm{if} \ b\mathtt{==}0}
  \ito N_1)$.
%
By applying the translation  to the rule,
we have the following rule and translation:
\begin{center}
\includegraphics[scale=\tinyscale,keepaspectratio,clip]{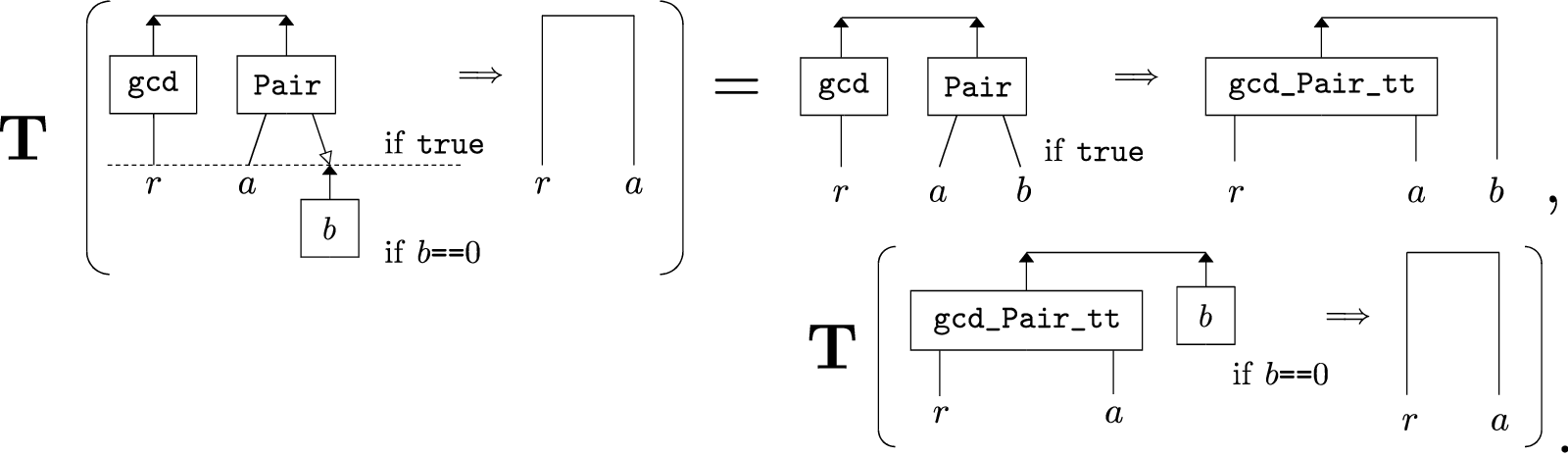}
\end{center}

\noindent
$\mathrm{T}$ works for 
$\nested{\mathtt{gcd\_Pair\_tt}\inter \mathtt{Int}(b) \ \mathrm{if} \ \mathtt{true}}$ as the identity function, so we obtain the following rules as the result: 
\begin{center}
\includegraphics[scale=\tinyscale,keepaspectratio,clip]{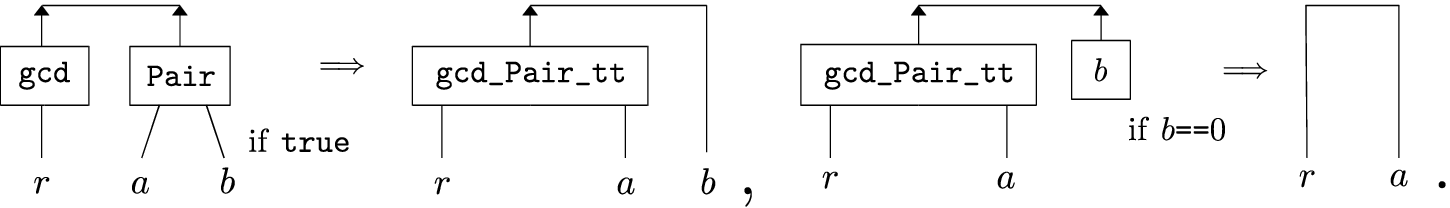}
\end{center}

\noindent
The other rule for $\texttt{gcd}\inter\texttt{Pair}$
is similarly translated as follows:
\medskip

\noindent
\includegraphics[scale=\tinyscale,keepaspectratio,clip]{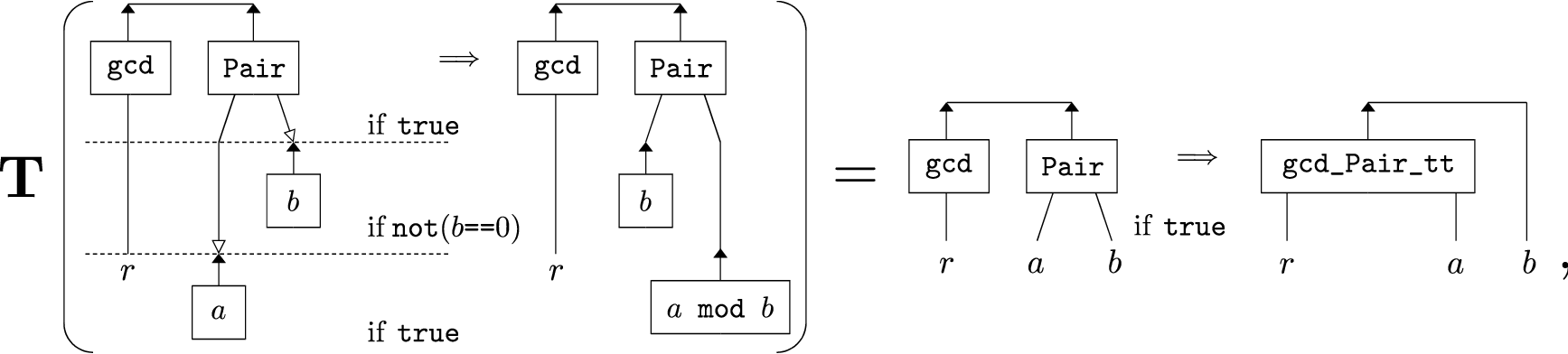}
\medskip

\noindent
\ \ \includegraphics[scale=\tinyscale,keepaspectratio,clip]{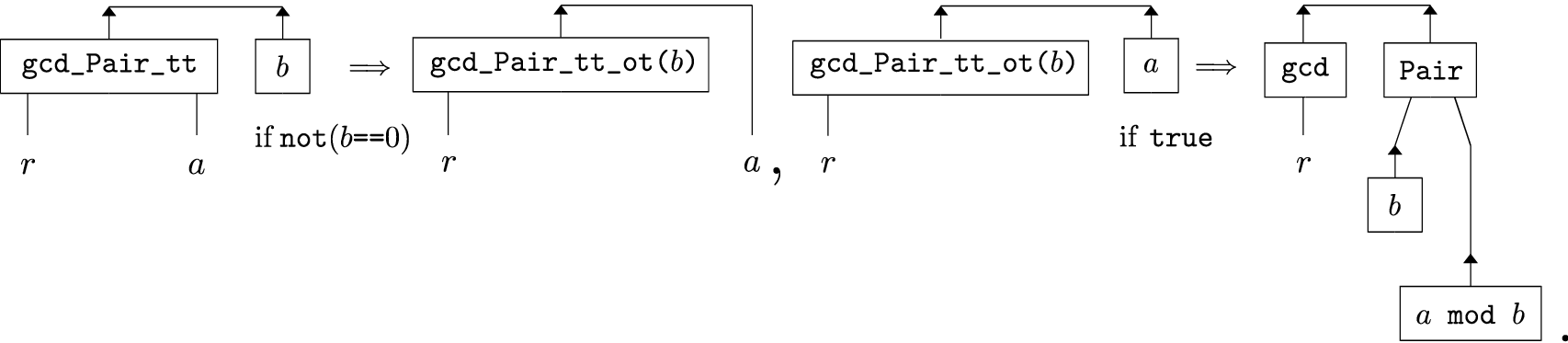}

\noindent
By the obtained conditional rules for $\mathtt{gcd\_Pair\_tt}\inter \mathtt{Int}(b)$,
we have two branches according to the value of $b$---return the result $a$ if $b$ is 0, otherwise proceed to the calculation.
We note the four rules obtained are the same as the non-nested rules in Example~\ref{example:gcd},
by changing $\mathtt{gcd\_Pair\_tt}$ and $\mathtt{gcd\_Pair\_tt\_ot}$
into $\mathtt{gcd1}$ and $\mathtt{gcd2}$, respectively.
\end{example}

For each connection of a conditional \textsc{nap}, 
the translation $\mathrm{T}$ introduces 
a rule in which one of the active pair agents
has a fresh symbol for $\Sigma$ and the previous occurring ones.
Therefore, the following holds:

\begin{lemma}\label{lemma:no-overlapped-in-a-rule}
  Let $R_\mathit{if}$ be a rule on a conditional \textsc{nap}.
  $\mathrm{T}[R_\mathit{if}]$ does not have
  two rules for the same active pair
 $P$ such that
  $(P \ \mathrm{if} \ c_1 \ito N_1)$ and 
  $(P \ \mathrm{if} \ c_2 \ito N_2)$.
  \qed
\end{lemma}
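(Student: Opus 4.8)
The plan is to proceed by structural induction on the conditional \textsc{nap} $\nested{P_\mathit{if}}$, mirroring the inductive definition of the translation $\mathbf{T}$. The statement to establish is that the rule set $\mathbf{T}[R_\mathit{if}]$ produced from a single rule $R_\mathit{if} = (\nested{P_\mathit{if}} \ito N)$ never contains two distinct rules sharing the same active pair $P$ but carrying different conditional expressions $c_1$, $c_2$. First I would handle the base case: when $\nested{P_\mathit{if}}$ is $\nested{\alpha(\vec{x})\inter\beta(\vec{y}) \ \mathrm{if}\ \mathit{c}}$, the translation acts as the identity and produces exactly one rule, so there is trivially no pair of rules to collide.

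For the step case, I would unfold one application of $\mathbf{T}$ to $\nested{\alpha(\vec{x})\inter\beta(\vec{y}) \ \mathrm{if}\ \mathit{c}, \ z-\gamma(\vec{w}) \ \mathrm{if}\ \mathit{c}_z,\ \vec{u}}$. This produces two things: one rule for $\alpha'(\vec{a})(\vec{x})\inter\beta'(\vec{b})(\vec{y}) \ \mathrm{if}\ \mathit{c}$ whose RHS plugs in the fresh agent $\kappa$, and the recursively translated set $\mathbf{T}[(\nested{\kappa(\vec{a},\vec{b})(\vec{x}, y_1,\ldots,y_m)\inter \gamma(\vec{w}) \ \mathrm{if}\ \mathit{c}_z,\ \vec{u}} \ito N)]$. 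The crucial observation is that these two groups of rules are over disjoint sets of active pairs: the single freshly generated rule concerns the active pair $\alpha'\inter\beta'$, while every rule in the recursive part concerns an active pair in which one agent is $\kappa$ (either $\kappa$ itself at the head, or some further fresh symbol introduced deeper in the recursion). Since $\kappa$ is chosen fresh for $\Sigma$ and for all previously occurring symbols, no recursively generated rule can mention the pair $\alpha'\inter\beta'$, so no collision can straddle the two groups. The induction hypothesis then settles the recursive group internally.

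The step I expect to be the main obstacle is making the freshness argument fully rigorous, namely verifying that $\kappa$ being fresh is genuinely enough to keep the active pair $\alpha'\inter\beta'$ disjoint from every active pair arising in the recursive translation. This requires noting that each recursive step again introduces a brand new symbol depending uniquely on the agent names and conditional expression at that level, so the sequence of fresh symbols $\kappa, \kappa', \ldots$ are pairwise distinct and each is distinct from $\alpha'$ and $\beta'$. I would formalize this by observing that the "previous occurring symbols" clause in the definition of $\mathbf{T}$ accumulates monotonically, so that each newly generated head symbol is strictly newer than all symbols appearing in the rules produced so far; hence the head active pairs of the generated rules are all pairwise distinct, and in particular no active pair is ever assigned two rules. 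The remaining content is routine: since each level contributes exactly one non-recursive rule whose active pair is unique across the entire generated set, there is no pair $P$ receiving both $(P \ \mathrm{if}\ c_1 \ito N_1)$ and $(P \ \mathrm{if}\ c_2 \ito N_2)$, which is precisely the claim.
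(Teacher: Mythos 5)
Your proof is correct and takes essentially the same route as the paper, which justifies the lemma purely by the observation that each connection of the conditional \textsc{nap} makes $\mathrm{T}$ introduce a rule whose active pair contains a symbol fresh for $\Sigma$ and all previously occurring symbols, so the head pairs are pairwise distinct. Your structural induction merely makes explicit the argument the paper leaves implicit (the lemma is stated with an immediate \textit{qed}), including the key point that freshness of $\kappa$ keeps $\alpha'\inter\beta'$ disjoint from every active pair in the recursively generated rules.
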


\begin{lemma}\label{lemma:no-overlapped-in-rules}
  Let $R_1, R_2$ be pairwise distinct rules on conditional \textsc{nap}s.
  When for the same active pair $P$,
  $(P \ \mathrm{if} \ c_1 \ito N_1)$ and
  $(P \ \mathrm{if} \ c_2 \ito N_2)$ are
  introduced by $\mathrm{T}[R_1]$ and $\mathrm{T}[R_2]$,
  then $c_1,c_2$ are disjoint or $N_1 = N_2$. \qed
\end{lemma}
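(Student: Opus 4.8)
The plan is to argue by a case analysis on the shared active pair $P$, exploiting the fact that the fresh symbols introduced by $\mathrm{T}$ act as a faithful encoding of the conditional \textsc{nap} prefix that produced them. The first step is to record this encoding as an auxiliary observation: since each fresh symbol $\kappa$ created in the step case depends uniquely on the two agents $\alpha',\beta'$ of the current active pair and on its conditional expression $c$, a straightforward induction on the translation shows that the symbol produced at depth $k$ is an injective function of the entire sequence of agents and conditional expressions met along the sequential path from the root pair down to depth $k$. Consequently, if a symbol $\kappa$ occurs in both $\mathrm{T}[R_1]$ and $\mathrm{T}[R_2]$, then the conditional \textsc{nap}s of $R_1$ and $R_2$ must coincide --- agents and conditions alike --- on the whole prefix up to the connection that created $\kappa$.

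With this in hand I would split on the form of the shared pair $P$. If $P$ is the outermost pair $\alpha\inter\beta$, both underlying \textsc{nap}s begin with $\alpha\inter\beta$; if $P$ is a pair $\kappa\inter\gamma$ with $\kappa$ fresh, the encoding observation forces $R_1,R_2$ to share the conditional prefix ending at $\kappa$, and ``$P$ is the same pair'' forces the next agent $\gamma$ and (by local-sequentiality clause (2c)) the extended port to agree as well. In either situation the rule that $\mathrm{T}$ emits for $P$ carries as its condition exactly the conditional expression $c^{(i)}$ attached at that position of the \textsc{nap} (the outermost condition in the root case, the condition of the relevant connection otherwise). I would then compare $c^{(1)}$ and $c^{(2)}$. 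When $c^{(1)}\neq c^{(2)}$, the two prefixes lie in the common local sequential set guaranteed by pairwise distinctness, so clauses (1a)/(2a) of Definition~\ref{definition:sequentiality-of-cnap}, together with (2b) which places the relevant prefixes in that set, give that $c^{(1)}$ and $c^{(2)}$ are disjoint, which is the first alternative of the conclusion.

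The remaining, more delicate case is $c^{(1)}=c^{(2)}$, where I must instead derive $N_1=N_2$. The key point is that the right-hand side emitted for $P$ is fully determined by the same data that fixes the next fresh symbol: when the connection at $P$ is not the last one along either path, $\mathrm{T}$ wires $P$'s RHS to a fresh agent $\kappa'$ depending uniquely on $(\kappa,\gamma,c^{(i)})$ --- identical for both rules --- so the two emitted right-hand sides coincide. To close the argument I would rule out the terminal and mixed configurations using the second clause of pairwise distinctness: if one path stopped at $P$ (base case) while the other continued, the shorter conditional \textsc{nap} would be a subnet of the longer one, and if both stopped at $P$ with matching prefix and condition the two \textsc{nap}s would be identical; either way $R_1$ and $R_2$ would violate pairwise distinctness, so these configurations cannot occur for distinct rules. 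I expect the main obstacle to be the auxiliary encoding step: phrasing ``$\kappa$ depends uniquely on $\alpha',\beta',c$'' precisely enough to yield genuine injectivity through the induction, so that a coincidence of fresh symbols really forces a coincidence of the entire conditional prefix (conditions included), since that is exactly what lets the local-sequentiality disjointness and the subnet exclusions do their work.
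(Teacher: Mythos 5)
Your proposal is correct and follows essentially the same route as the paper, which states this lemma without an explicit proof, justifying it only by the remark that each fresh symbol $\kappa$ introduced by $\mathrm{T}$ depends uniquely on the current pair's symbols and conditional expression---precisely the injective-encoding observation you make, so that a shared pair forces a shared conditional prefix. Your elaboration correctly supplies the details the paper leaves implicit: clause (2c) of local sequentiality to pin down the extension port, clauses (1a)/(2a) for disjointness when the conditions at the shared position differ, and the subnet clause of pairwise distinctness to exclude the terminal and mixed configurations when they coincide.
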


We suppose an interaction net that has a symbol set $\Sigma$ and 
a set of pairwise distinct rules on conditional \textsc{nap}s $\mathcal{R}_\mathit{if}$.
By applying the translation $\mathrm{T}$ to each rule in 
$\mathcal{R}_\mathit{if}$, 
we obtain two sets $\mathcal{R}_\mathit{if}'$ and $\Sigma'$---a set of (non-nested) conditional rules and a set of symbol introduced during the translation,
respectively.
The system with a symbol set $\Sigma \cup \Sigma'$ and a rule set $\mathcal{R}_\mathit{if}'$ is also an interaction net
by Lemma~\ref{lemma:no-overlapped-in-a-rule} and \ref{lemma:no-overlapped-in-rules}.
In addition, the following proposition shows that
the conditional \textsc{nap} extension is conservative, 
\ie an interaction net with $\Sigma \cup \Sigma'$ and $\mathcal{R}_\mathit{if}$ is realised without the extension, thus as
an interaction net with $\Sigma \cup \Sigma'$ and $\mathcal{R}_\mathit{if}'$.

\begin{proposition}
  Let $(\nested{P_\mathit{if}} \ito N)$ be 
  a rule on conditional \textsc{nap}s and
  $\nested{P}$ be the condition dropped \textsc{nap} for 
  $\nested{P_\mathit{if}}$.
  If $\nested{P}$ is reduced to $N$ by using this rule,
  $\nested{P}$ is also reduced to $N$ by using only rules obtained from
  $\mathbf{T}[(\nested{P_\mathit{if}} \ito N)]$.
\end{proposition}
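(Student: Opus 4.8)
The plan is to prove the claim by structural induction on the conditional \textsc{nap} $\nested{P_\mathit{if}}$, mirroring precisely the two cases in the definition of $\mathbf{T}$. In the base case $\nested{P_\mathit{if}} = \nested{\alpha(\vec{x})\inter\beta(\vec{y}) \ \mathrm{if}\ \mathit{c}}$, the translation $\mathbf{T}$ is the identity, so the only rule obtained is $(\alpha\inter\beta \ \mathrm{if}\ \mathit{c} \ito N)$ itself. Since $\nested{P}$ is just the active pair $\alpha\inter\beta$ and it is assumed to reduce to $N$ by the nested rule (so in particular $\mathit{c}$ holds), the very same rule performs the reduction, and there is nothing further to show.

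For the step case, take $\nested{P_\mathit{if}} = \nested{\alpha'(\vec{a})(\vec{x})\inter\beta'(\vec{b})(\vec{y}) \ \mathrm{if}\ \mathit{c}, \ z-\gamma(\vec{w}) \ \mathrm{if}\ \mathit{c}_z,\ \vec{u}}$ with $z = y_{m+1}$. Recall that $\mathbf{T}$ emits the rule $(\alpha'(\vec{a})(\vec{x})\inter\beta'(\vec{b})(\vec{y}) \ \mathrm{if}\ \mathit{c} \ito M)$, where $M$ attaches the fresh agent $\kappa(\vec{a},\vec{b})(\vec{x}, y_1,\ldots,y_m)$ by its principal port to $y_{m+1}$, together with the rules in $\mathbf{T}[(\nested{\kappa(\vec{a},\vec{b})(\vec{x}, y_1,\ldots,y_m)\inter \gamma(\vec{w}) \ \mathrm{if}\ \mathit{c}_z,\ \vec{u}} \ito N)]$. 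Since by hypothesis $\nested{P}$ reduces to $N$ via the nested rule, the \textsc{nap} is available, hence every conditional expression is true; in particular $\mathit{c}$ is true. The strategy is then: first apply the emitted outer rule to the active pair $\alpha\inter\beta$ occurring in $\nested{P}$ --- legal precisely because $\mathit{c}$ holds --- which rewrites it to $M$ and thereby places the principal port of $\kappa$ against the principal port of the $\gamma$ that was connected at $z$, creating a fresh active pair $\kappa\inter\gamma$. The net now obtained is exactly the condition-dropped \textsc{nap} of the subrule $(\nested{\kappa\inter\gamma \ \mathrm{if}\ \mathit{c}_z,\ \vec{u}} \ito N)$, so the induction hypothesis applies and finishes the reduction to $N$ using only rules from the sub-translation.

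The step that needs the most care is verifying that availability is preserved across the introduction of $\kappa$, so that the induction hypothesis is genuinely applicable. The conditional expressions $\mathit{c}_z$ and those in $\vec{u}$ may mention attribute variables of $\alpha$ and $\beta$; the translation records these by equipping $\kappa$ with the attribute tuple $(\vec{a},\vec{b})$, so after the outer rewrite these values are still present and the remaining conditions evaluate to true exactly as they did in $\nested{P_\mathit{if}}$. One must also check the bookkeeping of ports: the auxiliary ports $\vec{x}, y_1,\ldots,y_m$ carried onto $\kappa$, together with $\vec{w}$ and the free ports reached through $\vec{u}$, coincide with the free ports of $\nested{P}$, so that no interface is disturbed and the new active pair $\kappa\inter\gamma$ sits in the correct context. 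Once this correspondence is established, the single outer step followed by the inductive reduction yields the required path $\nested{P} \rto^{*} N$.
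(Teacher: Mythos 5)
Your proposal is correct and takes essentially the same route as the paper's own proof: structural induction on conditional \textsc{nap}s, with the base case handled by $\mathbf{T}$ being the identity, and the step case by first firing the emitted outer rule $(\alpha\inter\beta \ \mathrm{if}\ c \ito M)$ to create the fresh active pair $\kappa\inter\gamma$ and then invoking the induction hypothesis on $\mathbf{T}[(\nested{\kappa\inter\gamma \ \mathrm{if}\ c_z,\ \vec{u}} \ito N)]$. Your additional checks---that availability of the remaining conditions is preserved because $\kappa$ carries the attribute tuple $(\vec{a},\vec{b})$, and that the interface of free ports is undisturbed---are details the paper's terser proof leaves implicit, not a different argument.
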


\begin{proof}
  By structural induction on conditional \textsc{nap}s.
  \begin{description}
  \item [Base case:]
    The translation $\mathrm{T}$ works as the identity function,
    so the $\nested{P}$ is reduced to $N$ by using
    rules obtained from the translation.

  \item[Step case:]
    We assume that $\nested{P_\mathit{if}}$ is
    $\nested{P_1 \ \mathrm{if} \ c_1,
      \ z-\gamma(\vec{w}) \ \mathrm{if} \ c, \ \vec{u}}$
      and the condition dropped \textsc{nap} $\nested{P}$
      is reduced to $N$ by the rule $(\nested{P_\mathit{if}} \ito N)$.
    $\mathrm{T}$ generates the following rules:
    \begin{description}
    \item[(1)] $(P_1 \ \mathrm{if} \ c_1 \ito M)$ where $M$ is a net
      obtained by connecting the $z$ to
    the principal port of an agent $\kappa(\vec{z})$,
    where the symbol $\kappa$ is a fresh and uniquely depending on the agents symbols in $P_1$ and the $c_1$.
    \item[(2)] $\mathrm{T}
      [\nested{\kappa(\vec{z})\inter\gamma(\vec{w}) \ \mathrm{if} \ c, \ \vec{u}}
        \ito N]$.
    \end{description}
    By the rule of (1), $P_1$ is reduced to
    an active pair $(\kappa, \gamma)$,
    and it is iteratively reduced to $N$ by the induction hypothesis. 
    \qedhere
  \end{description}
\end{proof}

\section{A notation for rules on conditional \textsc{nap}s}\label{sec:notation}
In this section, we introduce a notation
to facilitate expressing sets of pairwise distinct rules.

It is practical to write rules for an active pair $\alpha\inter\beta$
with conditional expressions $c_1,\ldots,c_n$ as \emph{guards}
like in Haskell~\cite{marlow2010haskell},
which are evaluated one at a time from $c_1$ to $c_n$, as follows:
\[
\begin{array}{l}
  \alpha(\vec{x}) \inter \beta(\vec{y}) \\
  \qquad \verb/| / c_1 \ito N_1\\
  \qquad \verb/| / c_2 \ito N_2\\
  \qquad\qquad \vdots\\
  \qquad \verb/| / c_n \ito N_n.\\
\end{array}
\]

\noindent
The last placed ``$\mathit{otherwise}$'' is evaluated as true. We can suppose that this is translated into the following rules
where the conditional expressions do not overlap:
\[\begin{array}{l}
((\alpha, \beta) \ \mathrm{if}\ c_1 \ {\ito} \ N_1), \\
((\alpha, \beta) \ \mathrm{if}\ \sym{not}(c_1) \ \sym{and} \ c_2 \ {\ito} \ N_2), \ \ldots,\\
  ((\alpha, \beta) \ \mathrm{if}\ \sym{not}(c_1 \ \sym{or} \ c_2 \ \sym{or} \ \cdots \ \sym{or} \ c_{n-1}) \ \sym{and} \ c_n \ {\ito} \ N_n).
\end{array}\]

\noindent
As another example, we take the following two rules
whose bases of conditional \textsc{nap}s are 
the same ones which are followed by connections to the same port:
\[
\begin{array}{l}
(\nested{\alpha(\vec{x})\inter\beta(\vec{y}) \ \mathrm{if}\ c,
  \ z-\gamma(\vec{z_1}) \ \mathrm{if}\ c_1} \ito N_1),\\
(\nested{\alpha(\vec{x})\inter\beta(\vec{y}) \ \mathrm{if}\ c,
  \ z-\kappa(\vec{z_2}) \ \mathrm{if}\ c_2} \ito N_2)
\end{array}
\]

\noindent
It is also convenient to group these connections by the same port $z$
with guards and a \texttt{case}-expression as follows:
\[
\begin{array}{l}
  \alpha(\vec{x}) \inter \beta(\vec{y}) \\
  \quad \verb/| / c \verb| -> case of | z\\
  \qquad\qquad\qquad
  \begin{array}{lcl}
    \gamma(\vec{z_1}) & \verb/|/ & c_1  \ \ito \ N_1\\
    \kappa(\vec{z_2}) & \verb/|/ & c_2 \ \ito \ N_2.\\
  \end{array}
\end{array}
\]

\noindent
We may omit the guards if there is only one condition $\texttt{true}$.

\begin{example}\label{example:gcd-case}
By using this notation,
the rules for $\texttt{gcd}\inter\texttt{Pair}$
in Example~\ref{example:gcd} is written
in the following single sentence:
\[
\begin{array}{l}
  \texttt{gcd}(r) \inter \texttt{Pair}(p_1,p_2) \\
  \quad \sym{-> case of } p_2\\
  \qquad\quad
  \begin{array}{lcl}
    \mathrm{Int}(b) & \sym{|} & b\sym{==}0 \ \ito \ N_1\\
                    & \sym{|} & \mathit{otherwise} \ \sym{-> case of } p_1\\
  \end{array}\\
  \qquad\qquad\qquad\qquad\qquad\qquad\qquad
  \begin{array}{l}
    \mathrm{Int}(a) \ \ito \ N_2.
  \end{array}
\end{array}
\]
\end{example}

\subsection{Definition and translations of the notation}
First, we define the rule notation by the following $\mathit{Rule}$
with $\mathit{Spray}$s, which are branches connected to nets:

\smallskip

\noindent{}
\begin{minipage}[t]{0.45\textwidth}
$
\setlength\arraycolsep{3pt}
\begin{array}{llllll}
  \mathit{Rule} &::=& \alpha(\vec{x}) \inter \beta(\vec{y}) & \sym{|} & c_1 & \mathit{Spray}_1\\
  &&&\sym{|} & c_2 & \mathit{Spray}_2\\
  &&& & \vdots \\
  &&&\sym{|} & c_m & \mathit{Spray}_m\\
\vspace{6em}
\end{array}
$
\end{minipage}
\hspace{0.04\columnwidth}
\begin{minipage}[t]{0.45\textwidth}
$
\setlength\arraycolsep{3pt}
\begin{array}{llllll}
  \mathit{Spray} &::=& \multicolumn{4}{l}{\ito N}\\
      &\mid & \multicolumn{4}{l}{\sym{->} \ \sym{case} \ z \ \sym{of} }\\
      &&\qquad \gamma_1(\vec{w_1}) & \sym{|} & c_{11} &  \mathit{Spray}_{11}\\
      && && \vdots & \\
      &&                          & \sym{|} & c_{1m_1} & \mathit{Spray}_{1m_1}\\
      &&\qquad\quad \vdots && \\
      &&\qquad \gamma_n(\vec{w_n}) & \sym{|} & c_{n1} & \mathit{Spray}_{n1}\\
      && && \vdots & \\
      &&                          & \sym{|} & c_{nm_n} & \mathit{Spray}_{nm_n}\\
\end{array}
$
\end{minipage}

\smallskip

\noindent
where $\alpha(\vec{x})$, $\beta(\vec{y})$ are agents,
$c_i$ is a conditional expression, $N$ is a net, $z$ is a free port,
$\gamma_1(\vec{w_1}), \ldots, \gamma_n(\vec{w_n})$ are
distinct agents,
and the conditional expression ``$\mathit{otherwise}$'' can appear last.
We may write $\textit{Spray}$ as $S$ when there is no confusion.

\paragraph{Translation of the rule notation to rules on conditional \textsc{nap}s:}
We define the following translations of the rule notation
to obtain rules on conditional \textsc{nap}s:

\[\mathrm{T_R}: \mathit{Rule} \mapsto \overrightarrow{\mathrm{T}_S[\nested{P_\mathit{if}}, \ \mathit{Spray}]}, \quad
\mathrm{T_S}: (\nested{P_\mathit{if}}, \ \mathit{Spray}) \mapsto \vec{r}
\]

where $\nested{P_\mathit{if}}$ is a conditional \textsc{nap},
$\vec{r}$ is a rule sequence.
If the $\nested{P_\mathit{if}}$ obtained during translation do not satisfy the conditions of given in Definition~\ref{def:conditional-naps}, then the translation will fail.

We may write 
``$\sym{not}(c_1 \ \sym{or} \ c_2 \ \sym{or} \ \cdots \ \sym{or} \ c_{n-1}) \ \sym{and} \ \ c_n$''
as 
``$\sym{not}(c_1, \ c_2, \ldots, \  c_{n-1}) \ \sym{\&} \ c_n$''
to save space.
In addition, we may write ``$\sym{not}(c_1, \ c_2, \ldots, \  c_{n-1}) \ \sym{\&} \ \mathit{otherwise}$'' just as ``$\mathit{otherwise}$''.

\begin{flalign*}
  &
  \begin{array}{lcl}
  \mathrm{T_R}
  \left[
  \setlength\arraycolsep{3pt}
  \begin{array}{llll}
  \alpha(\vec{x}) \inter \beta(\vec{y}) & \sym{|} & c_1 & \mathit{S}_1\\
  &\sym{|} & c_2 & \mathit{S}_2\\
  & & \vdots \\
  &\sym{|} & c_n & \mathit{S}_n\\
\end{array}
\right]
  &\stackrel{\mathrm{def}}{=}&  
  \begin{array}{l}
    \mathrm{T_S}[\nested{\alpha(\vec{x}) \inter \beta(\vec{y}) \ \mathrm{if}\ c_1},
      \ \mathit{S}_1],\\
    \mathrm{T_S}[\nested{\alpha(\vec{x}) \inter \beta(\vec{y}) \ \mathrm{if} \ \sym{not}(c_1) \ \sym{\&} \ c_2}, 
      \ \mathit{S}_2]\\
    \multicolumn{1}{c}{\vdots}\\
    \mathrm{T_S}[\nested{\alpha(\vec{x}) \inter \beta(\vec{y}) \ \mathrm{if} \ \sym{not}(c_1, \ c_2, \ldots, \ c_{n-1}) \ \sym{\&} \ c_n }, 
      \ \mathit{S}_n].\\
  \end{array}
  \end{array}&
\end{flalign*}


\begin{flalign*}
  &
  \setlength\arraycolsep{2pt}
  \begin{array}{lcl}
    \mathrm{T_S} \left[ \nested{P_\mathit{if}},\  {\ito N} \right]
    &
    \stackrel{\mathrm{def}}{=}
    &
    (\nested{P_\mathit{if}} \ito N),  \\\\
    %
    %
    \mathrm{T_S}
    \left[
      \setlength\arraycolsep{2pt}
      \nested{P_\mathit{if}},
      \begin{array}{llll}
        \multicolumn{4}{l}{\sym{->} \ \sym{case} \ z \ \sym{of} }\\
        \qquad \gamma_1(\vec{w_1}) & \sym{|} & c_{11} &  \mathit{S}_{11}\\
        && \vdots & \\
        & \sym{|} & c_{1m_1} & \mathit{S}_{1m_1}\\
        \qquad\quad \vdots && \\
        \qquad \gamma_n(\vec{w_n}) & \sym{|} & c_{n1} & \mathit{S}_{n1}\\
        && \vdots & \\
        & \sym{|} & c_{nm_n} & \mathit{S}_{nm_n}\\
      \end{array}
      \right]
    &\stackrel{\mathrm{def}}{=}&
    \begin{array}{l}
      \mathrm{T_S}[\nested{P_\mathit{if},\ z-\gamma_1(\vec{w_1}) \ \mathrm{if} \ c_{11}}, 
        \ \mathit{S}_{11}],\\
      \multicolumn{1}{c}{\vdots}\\
      \mathrm{T_S}[\nested{P_\mathit{if},\ z-\gamma_1(\vec{w_1}) \ \mathrm{if} \ 
          \sym{not}(c_{11}, \ldots, \ c_{n{m_1-1}}) \ \sym{\&} \ c_{1m_1}}, 
        \ \mathit{S}_{1m_1}],\\
      \multicolumn{1}{c}{\vdots}\\
      \mathrm{T_S}[\nested{P_\mathit{if},\ z-\gamma_n(\vec{w_n}) \ \mathrm{if} \ c_{n1}}, 
        \ \mathit{S}_{n1}],\\
      \multicolumn{1}{c}{\vdots}\\
      \mathrm{T_S}[\nested{P_\mathit{if},\ z-\gamma_1(\vec{w_n}) \ \mathrm{if} \ 
          \sym{not}(c_{n1}, \ldots, \ c_{n{m_n-1}}) \ \sym{\&} \ c_{nm_n}}, 
        \ \mathit{S}_{nm_n}].\\
    \end{array}
  \end{array}
  &
\end{flalign*}

\begin{example}\label{example:trans-gcd-case}
  The following is a translation
  of the rule notation in Example~\ref{example:gcd-case}:
\begin{flalign*}
&\mathrm{T_R}
\left[
  \setlength\arraycolsep{2pt}
\begin{array}{l}
  \texttt{gcd}(r) \inter \texttt{Pair}(p_1,p_2) \\
  \quad \sym{| true -> case of } p_2\\
  \qquad\qquad\qquad\quad
  \begin{array}{lcl}
    \mathrm{Int}(b) & \sym{|} & b\sym{==}0 \ \ito \ N_1\\
                    & \sym{|} & \mathit{otherwise} \ \sym{-> case of } p_1\\
  \end{array}\\
  \qquad\qquad\qquad\qquad\qquad\qquad\qquad\qquad\quad
  \begin{array}{l}
    \mathrm{Int}(a) \ \sym{| true} \ \ito \ N_2
  \end{array}
\end{array}  
\right]&
\end{flalign*}
%
%
\begin{flalign*}
&=\mathrm{T_S}
  \left[
    \nested{\texttt{gcd}(r) \inter \texttt{Pair}(p_1,p_2) \ 
      \mathrm{if} \ \sym{true}}, \
    \setlength\arraycolsep{2pt}
    \begin{array}{l}
      \sym{-> case of } p_2\\
      \qquad
      \begin{array}{lcl}
        \mathrm{Int}(b) & \sym{|} & b\sym{==}0 \ \ito \ N_1\\
        & \sym{|} & \mathit{otherwise} \ \sym{-> case of } p_1\\
      \end{array}\\
      \qquad\qquad\qquad\qquad\qquad\qquad
      \begin{array}{l}
        \mathrm{Int}(a) \ \sym{| true} \ \ito \ N_2
      \end{array}
    \end{array}
\right]&
\end{flalign*}
%
%
\begin{flalign*}
  &=\mathrm{T_S}
  \left[
    \nested{\texttt{gcd}(r) \inter \texttt{Pair}(p_1,p_2) \ 
      \mathrm{if} \ \sym{true}, \ p_2 - \mathrm{Int}(b) \
      \mathrm{if} \ b\sym{==}0}, \ \ito \ N_1
    \right],&\\
  &\quad \ 
  \mathrm{T_S}
  \left[
    \nested{\texttt{gcd}(r) \inter \texttt{Pair}(p_1,p_2) \ 
      \mathrm{if} \ \sym{true}, \ p_2 - \mathrm{Int}(b) \
      \mathrm{if} \ \mathit{otherwise}}, \ 
    \setlength\arraycolsep{3pt}
    \begin{array}{l}
      \sym{-> case of } p_1 \\
      \qquad
      \begin{array}{lcl}
        \mathrm{Int}(a) & \sym{|} & \sym{true} \ \ito \ N_2
      \end{array}
    \end{array}
    \right]& 
\end{flalign*}
%
%
\begin{flalign*}
  &=(
    \nested{\texttt{gcd}(r) \inter \texttt{Pair}(p_1,p_2) \ 
      \mathrm{if} \ \sym{true}, \ p_2 - \mathrm{Int}(b) \
      \mathrm{if} \ b\sym{==}0} \ \ito \ N_1
    ),&\\
  &\quad \ 
  \mathrm{T_S}
  \left[
    \nested{\texttt{gcd}(r) \inter \texttt{Pair}(p_1,p_2) \ 
      \mathrm{if} \ \sym{true}, \ p_2 - \mathrm{Int}(b) \
      \mathrm{if} \ \mathit{otherwise},
      \ p_1 - \mathrm{Int}(a) \ \mathrm{if} \ \sym{true}}, \ \ito \ N_2
    \right]&
\end{flalign*}
%
%
\begin{flalign*}
  &=(
    \nested{\texttt{gcd}(r) \inter \texttt{Pair}(p_1,p_2) \ 
      \mathrm{if} \ \sym{true}, \ p_2 - \mathrm{Int}(b) \
      \mathrm{if} \ b\sym{==}0} \ \ito \ N_1
    ),&\\
  &\quad \ 
  (
    \nested{\texttt{gcd}(r) \inter \texttt{Pair}(p_1,p_2) \ 
      \mathrm{if} \ \sym{true}, \ p_2 - \mathrm{Int}(b) \
      \mathrm{if} \ \mathit{otherwise},
      \ p_1 - \mathrm{Int}(a) \ \mathrm{if} \ \sym{true}} \ \ito \ N_2
    ).
    & 
\end{flalign*}
Thus, we derive the same rules as in Example~\ref{example:gcd}.
\end{example}

\subsection{Properties of the rule notation}
In this section we show properties of the notation. 
First, we show that the translation $\mathrm{T_R}$ is a function
if $\mathrm{T_R}$ does not fail,
thus we have the same result for the same rule notation.

\begin{lemma}\label{lemma:Uniqueness-of-S-for-the-same-Pif-in-Ts}
  Let $\mathrm{Rule}$ be a string accepted by the rule notation,
  and let $\mathrm{T_R}[\mathrm{Rule}]$ not fail.
  If during expansions of $\mathrm{T_R}[\mathrm{Rule}]$ to rules, 
  the followings occur for the same conditional \textsc{nap} $\nested{P_\mathit{if}}$,
  then $S_1$ and $S_2$ are the same spray:
  \begin{itemize}
    \item $\mathrm{T_S}[\nested{P_\mathit{if}}, \ S_1]$,
    \item $\mathrm{T_S}[\nested{P_\mathit{if}}, \ S_2]$.
  \end{itemize}
\end{lemma}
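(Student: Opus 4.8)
The plan is to show that the conditional \textsc{nap} $\nested{P_\mathit{if}}$ handed to a call $\mathrm{T_S}[\nested{P_\mathit{if}}, S]$ records enough information to recover uniquely the node of the parse tree of $\mathrm{Rule}$ at which that call is issued; since each node carries exactly one spray, $S$ is then a function of $\nested{P_\mathit{if}}$. I would prove this by induction on the number $k$ of connections occurring in $\nested{P_\mathit{if}}$, establishing for every $k$ the statement: any two calls $\mathrm{T_S}[\nested{P_\mathit{if}}, S_1]$ and $\mathrm{T_S}[\nested{P_\mathit{if}}, S_2]$ appearing in the expansion of $\mathrm{T_R}[\mathrm{Rule}]$ satisfy $S_1 = S_2$. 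For the base case $k = 0$ the first argument is $\nested{\alpha(\vec{x}) \inter \beta(\vec{y}) \ \mathrm{if} \ c}$, carrying no connection. By the definition of $\mathrm{T_S}$ a connection is appended only on a recursive call, so a connection-free first argument can arise only from the top-level clauses produced by $\mathrm{T_R}$, namely the calls $\mathrm{T_S}[\nested{\alpha(\vec{x})\inter\beta(\vec{y}) \ \mathrm{if} \ c_i'}, S_i]$ with $c_1' = c_1$ and $c_i' = \sym{not}(c_1, \ldots, c_{i-1}) \ \sym{\&} \ c_i$. As the accumulated guards $c_1', \ldots, c_n'$ are pairwise distinct as expressions, the condition $c$ fixes a single index $i$, whence $S_1 = S_2 = S_i$.

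For the step case write $\nested{P_\mathit{if}} = \nested{P'_\mathit{if}, \ z-\gamma(\vec{w}) \ \mathrm{if} \ c}$, with the displayed connection the last one and $\nested{P'_\mathit{if}}$ carrying $k-1$ connections. Inspecting $\mathrm{T_S}$ again, any call with first argument $\nested{P_\mathit{if}}$ is produced by a recursive call inside some $\mathrm{T_S}[\nested{P'_\mathit{if}}, S']$ in which $S'$ is a \texttt{case}-expression on the port $z$; moreover the parent's first argument is forced to be exactly $\nested{P_\mathit{if}}$ with its final connection deleted, i.e.\ $\nested{P'_\mathit{if}}$. The induction hypothesis applied to $\nested{P'_\mathit{if}}$ yields that all such parent calls share a single spray $S'$. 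It then remains to see that, within this fixed \texttt{case}-expression over $z$, the last connection $z-\gamma(\vec{w}) \ \mathrm{if} \ c$ selects a unique branch: the symbol $\gamma$ picks out one of the distinct pattern agents $\gamma_1, \ldots, \gamma_n$, and, among that agent's guards, the accumulated condition $c$ of the shape $\sym{not}(\ldots) \ \sym{\&} \ c_{ij}$ picks out a single guard index, since these accumulated conditions are once more pairwise distinct as expressions. Hence the spray attached to that branch is uniquely determined and $S_1 = S_2$.

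The step I expect to be the main obstacle is justifying rigorously that the pair (agent symbol, accumulated condition) is injective across the siblings of one branching point, together with the fact that the base condition is fixed once at the top level and never rewritten as connections are appended, so that two different top-level guards can never collapse to the same $\nested{P_\mathit{if}}$. Both points reduce to a size/containment argument on the conditional expressions, which are compared as syntax inside the conditional \textsc{nap}: each accumulated condition $\sym{not}(c_1, \ldots, c_{i-1}) \ \sym{\&} \ c_i$ properly contains the guards it negates and so has a strictly larger syntax tree than, and a different shape from, the other accumulated conditions at the same branching point, ruling out any coincidence. The hypothesis that $\mathrm{T_R}[\mathrm{Rule}]$ does not fail enters only to guarantee that every intermediate $\nested{P_\mathit{if}}$ is a well-formed conditional \textsc{nap} in the sense of Definition~\ref{def:conditional-naps}, so that the connections appended along the way are actually defined.
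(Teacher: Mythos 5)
Your proof is correct and takes essentially the same route as the paper: the paper's proof of this lemma is exactly ``by structural induction on the conditional \textsc{nap}s $\nested{P_\mathit{if}}$'' (stated in one line, with all details left implicit), and your induction on the number of connections is that same structural induction, with the key facts made explicit --- a connection-free \textsc{nap} can only arise from the top-level clauses of $\mathrm{T_R}$, a \textsc{nap} with a last connection forces its parent call to have first argument $\nested{P'_\mathit{if}}$, and the accumulated guards $\sym{not}(c_1,\ldots,c_{i-1})\ \sym{\&}\ c_i$ at any one branching point are pairwise syntactically distinct (your size/containment argument). Your elaboration, including the observation that guard distinctness must be read syntactically since conditional expressions are compared as part of the \textsc{nap}'s written form, is a faithful filling-in of what the paper's one-line induction requires, not a different argument.
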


\begin{proof}
  By structural induction on the conditional \textsc{nap}s $\nested{P_\mathit{if}}$.
%
\end{proof}

\noindent
From now on, we will only consider the case where the translation $\mathrm{T_R}$ does not fail.
By Lemma~\ref{lemma:Uniqueness-of-S-for-the-same-Pif-in-Ts},
we can show the property more precisely:

\begin{lemma}\label{lemma:Sequentiality-of-Ts}
  Let $\mathrm{Rule}$ be a string accepted by the rule notation.
  If during expansions of $\mathrm{T_R}[\mathrm{Rule}]$ to rules,
  the followings occur for the same conditional \textsc{nap} $\nested{P_\mathit{if}}$,
  then $y_1$ and $y_2$ are the same:
  \begin{itemize}
  \item $\mathrm{T_S}[\nested{P_\mathit{if}, \ y_1-\gamma_1(\vec{w}_1) \ \mathrm{if} \ c_1}, \ S_1]$,
  \item $\mathrm{T_S}[\nested{P_\mathit{if}, \ y_2-\gamma_2(\vec{w}_2) \ \mathrm{if} \ c_2}, \ S_2]$.
  \end{itemize}
  In addition,
  when $\gamma_1(\vec{w}_1)$ and $\gamma_2(\vec{w}_2)$ are the same,
  $c_1$ and $c_2$ are not overlapped.
\end{lemma}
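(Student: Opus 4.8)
The plan is to reduce both assertions to the shape of a single spray attached to $\nested{P_\mathit{if}}$, invoking Lemma~\ref{lemma:Uniqueness-of-S-for-the-same-Pif-in-Ts}. First I would observe, by inspecting the defining clauses of $\mathrm{T_S}$, that a call whose first argument is a properly extended \textsc{nap} $\nested{P_\mathit{if}, \ y-\gamma(\vec{w}) \ \mathrm{if} \ c}$ can be emitted only by the case clause applied to a call of the form $\mathrm{T_S}[\nested{P_\mathit{if}}, \ S]$: the clause $\mathrm{T_S}[\nested{P_\mathit{if}}, \ito N]$ produces a finished rule and no recursive call, while $\mathrm{T_R}$ only ever emits calls on the bare base \textsc{nap} $\nested{\alpha(\vec{x}) \inter \beta(\vec{y}) \ \mathrm{if} \ c_i}$. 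Hence the two displayed calls descend, respectively, from parent calls $\mathrm{T_S}[\nested{P_\mathit{if}}, \ S_1]$ and $\mathrm{T_S}[\nested{P_\mathit{if}}, \ S_2]$, both carrying the common first argument $\nested{P_\mathit{if}}$.

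Since both parent calls occur during the expansion of $\mathrm{T_R}[\mathrm{Rule}]$ and share $\nested{P_\mathit{if}}$, Lemma~\ref{lemma:Uniqueness-of-S-for-the-same-Pif-in-Ts} gives $S_1 = S_2 =: S$. As $S$ generates extended calls, it must be a case expression $\sym{->} \ \sym{case} \ z \ \sym{of} \ \ldots$, which names a single port $z$; every recursive call emitted by the case clause extends $\nested{P_\mathit{if}}$ by a connection at precisely that port. Therefore $y_1 = z = y_2$, which is the first claim.

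For the second claim, assume in addition that $\gamma_1(\vec{w}_1)$ and $\gamma_2(\vec{w}_2)$ coincide. The case expression $S$ lists pairwise distinct branch agents $\gamma_1(\vec{w_1}), \ldots, \gamma_n(\vec{w_n})$, so equal branch agents force both connections to come from one and the same branch, say $\gamma_i(\vec{w_i})$. Within that branch the case clause translates the guards $c_{i1}, \ldots, c_{im_i}$ into the conditions $c_{i1}$, $\sym{not}(c_{i1}) \ \sym{\&} \ c_{i2}$, $\ldots$, i.e. each is conjoined with the negation of all preceding guards; these are pairwise disjoint by construction. As $c_1$ and $c_2$ are two of them, they are not overlapped.

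The step I expect to need the most care is the first: establishing that every extended call has a parent of the form $\mathrm{T_S}[\nested{P_\mathit{if}}, \ S]$ so that Lemma~\ref{lemma:Uniqueness-of-S-for-the-same-Pif-in-Ts} is applicable. The point is that the common prefix $\nested{P_\mathit{if}}$ already pins down the spray, so that even when the two displayed calls arise from textually distinct occurrences, those occurrences must carry one and the same spray $S$; once this is secured, reading off the unique port $z$ and the disjointness of the intra-branch guards is a direct consequence of the definitions of $\mathrm{T_S}$ and $\mathrm{T_R}$.
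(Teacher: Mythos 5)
Your proposal is correct and follows essentially the same route as the paper's own proof: trace both calls back to parent calls $\mathrm{T_S}[\nested{P_\mathit{if}}, \ S'_1]$ and $\mathrm{T_S}[\nested{P_\mathit{if}}, \ S'_2]$, apply Lemma~\ref{lemma:Uniqueness-of-S-for-the-same-Pif-in-Ts} to get $S'_1 = S'_2$, and read off the common port $z$ and the built-in disjointness of the guard conditions from the \texttt{case} clause of $\mathrm{T_S}$. Your version simply makes explicit two steps the paper leaves implicit (that an extended \textsc{nap} can only be emitted by the \texttt{case} clause, and that the $\sym{not}(\ldots)\ \sym{\&}\ c$ construction yields pairwise disjoint conditions within a branch), which is fine.
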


\begin{proof}
By the definition of $\mathrm{T_S}$,
$\mathrm{T_S}[\nested{P_\mathit{if}, \ y_1-\gamma_1(\vec{w}_1) \ \mathrm{if} \ c_1}, \ S_1]$ and
$\mathrm{T_S}[\nested{P_\mathit{if}, \ y_2-\gamma_2(\vec{w}_2) \ \mathrm{if} \ c_2}, \ S_2]$
are directly derived from
$\mathrm{T_S}[\nested{P_\mathit{if}}, \ S'_1]$ and
$\mathrm{T_S}[\nested{P_\mathit{if}}, \ S'_2]$ for some $S'_1$ and $S'_2$, respectively.
By Lemma~\ref{lemma:Uniqueness-of-S-for-the-same-Pif-in-Ts}, 
$S'_1 = S'_2$, and therefore this lemma holds.
\end{proof}

Here we form a set whose elements are conditional \textsc{nap}s
that appear during expansions of $\mathrm{T_R}$ to rules,
and show that the set is sequential.

\begin{definition}
  For conditional \textsc{nap}s $\nested{P_\mathit{if}}$ and
  $\nested{P_\mathit{if}, \ \vec{u}}$ for any sequence of connections $\vec{u}$
  (which may be empty),
  we write $\nested{P_\mathit{if}} \subseteq \nested{P_\mathit{if}, \ \vec{u}}$.
  We use $\mathrm{AllSub}(\nested{P_\mathit{if}})$ as a set
  $\{P'_\mathit{if} \ \mid \  P'_\mathit{if} \subseteq P_\mathit{if}  \}$.
\end{definition}

\begin{example}
  The followings are the results of applying 
  $\mathrm{AllSub}$ to rules obtained in Example~\ref{example:trans-gcd-case}:
  \begin{itemize}
  \item Elements of
    $\mathrm{AllSub}(\nested{\texttt{gcd}(r) \inter \texttt{Pair}(p_1,p_2) \ 
      \mathrm{if} \ \sym{true}, \ p_2 - \mathrm{Int}(b) \
      \mathrm{if} \ b\sym{==}0})$:
    \begin{itemize}
      \item $\nested{\texttt{gcd}(r) \inter \texttt{Pair}(p_1,p_2) \ 
        \mathrm{if} \ \sym{true}}$,
        \item $\nested{\texttt{gcd}(r) \inter \texttt{Pair}(p_1,p_2) \ 
      \mathrm{if} \ \sym{true}, \ p_2 - \mathrm{Int}(b) \
      \mathrm{if} \ b\sym{==}0}$.
    \end{itemize}
  \item Elements of
    $\mathrm{AllSub}(\nested{\texttt{gcd}(r) \inter \texttt{Pair}(p_1,p_2) \ 
      \mathrm{if} \ \sym{true}, \ p_2 - \mathrm{Int}(b) \
      \mathrm{if} \ \mathit{otherwise},
      \ p_1 - \mathrm{Int}(a) \ \mathrm{if} \ \sym{true}})$: 
    \begin{itemize}
    \item $\nested{\texttt{gcd}(r) \inter \texttt{Pair}(p_1,p_2) \ 
      \mathrm{if} \ \sym{true}}$,
    \item $\nested{\texttt{gcd}(r) \inter \texttt{Pair}(p_1,p_2) \ 
      \mathrm{if} \ \sym{true}, \ p_2 - \mathrm{Int}(b) \
      \mathrm{if} \ \mathit{otherwise}}$,
    \item $\nested{\texttt{gcd}(r) \inter \texttt{Pair}(p_1,p_2) \ 
      \mathrm{if} \ \sym{true}, \ p_2 - \mathrm{Int}(b) \
      \mathrm{if} \ \mathit{otherwise},
      \ p_1 - \mathrm{Int}(a) \ \mathrm{if} \ \sym{true}}$.
    \end{itemize}
      
  \end{itemize}
  
\end{example}

\begin{proposition}\label{proposition:sequentiality-of-Tr}
  Let $\mathrm{Rule}$ be a string accepted by the rule notation.
  When we obtain rules $(\nested{P_{{\mathit{if}_1}}} \ito N_1), \ldots, \
  (\nested{P_{{\mathit{if}_m}}} \ito N_m)$ from $\mathrm{T_R}[\mathrm{Rule}]$,
  then
  $\mathrm{AllSub}(\nested{P_{{\mathit{if}_1}}}) \ \cup \ \cdots \ \cup \ 
  \mathrm{AllSub}(\nested{P_{{\mathit{if}_m}}})$ is sequential.
\end{proposition}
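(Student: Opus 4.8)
The plan is to verify each clause of Definition~\ref{definition:sequentiality-of-cnap} directly for the set $\mathcal{P} = \mathrm{AllSub}(\nested{P_{\mathit{if}_1}}) \cup \cdots \cup \mathrm{AllSub}(\nested{P_{\mathit{if}_m}})$. The first step is to obtain a clean description of $\mathcal{P}$: since each $\mathrm{T_S}$ expansion of a $\sym{case}$-branch extends its first argument by exactly one connection, while $\mathrm{T_R}$ supplies the base pairs $\nested{\alpha(\vec{x})\inter\beta(\vec{y}) \ \mathrm{if}\ c'_k}$, the conditional \textsc{nap}s arising as first arguments of $\mathrm{T_S}$ along the path to a terminal call $\mathrm{T_S}[\nested{P_{\mathit{if}_i}}, \ito N_i]$ are precisely the prefixes collected by $\mathrm{AllSub}(\nested{P_{\mathit{if}_i}})$. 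Hence I would first argue that $\mathcal{P}$ equals the set of all conditional \textsc{nap}s occurring as the first argument of some $\mathrm{T_S}$ call during the expansion of $\mathrm{T_R}[\mathrm{Rule}]$; this identification is what lets me feed every element of $\mathcal{P}$ into Lemmas~\ref{lemma:Uniqueness-of-S-for-the-same-Pif-in-Ts} and~\ref{lemma:Sequentiality-of-Ts}.

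With this in hand, clause (2b) is immediate because $\mathrm{AllSub}$ is closed under removing a trailing connection: if $\nested{P_\mathit{if}, z-\gamma(\vec{w}) \ \mathrm{if}\ c} \in \mathcal{P}$ then it is a prefix of some $\nested{P_{\mathit{if}_i}}$, so the shorter prefix $\nested{P_\mathit{if}}$ lies in the same $\mathrm{AllSub}(\nested{P_{\mathit{if}_i}})$. For clause (1a), every base pair in $\mathcal{P}$ is of the form $\nested{\alpha(\vec{x})\inter\beta(\vec{y}) \ \mathrm{if}\ c'_k}$ produced by $\mathrm{T_R}$, whose guards are translated into $c_1, \ \sym{not}(c_1)\ \sym{\&}\ c_2, \ldots$; these are pairwise non-overlapping by construction, giving the required ``at most one true''.

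The heart of the argument is clauses (2a) and (2c), and here I would invoke Lemma~\ref{lemma:Sequentiality-of-Ts}. Suppose $\nested{P_\mathit{if}, z-\gamma(\vec{w}) \ \mathrm{if}\ c}\in\mathcal{P}$. Any other element of $\mathcal{P}$ extending the same $\nested{P_\mathit{if}}$ by one connection $z_j-\gamma_j(\vec{w_j}) \ \mathrm{if}\ c_j$ arises, by the characterization above, as the first argument of a $\mathrm{T_S}$ call directly derived from $\mathrm{T_S}[\nested{P_\mathit{if}}, S]$; by Lemma~\ref{lemma:Uniqueness-of-S-for-the-same-Pif-in-Ts} this $S$ is unique, so all such extensions come from one and the same $\sym{case}$-expression on a single port. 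Lemma~\ref{lemma:Sequentiality-of-Ts} then yields $z_j = z$, which is exactly clause (2c) since no differing port can appear, and, when additionally $\gamma_j(\vec{w_j}) = \gamma(\vec{w})$, that $c_j$ and $c$ do not overlap, giving clause (2a).

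The step I expect to be the main obstacle is the initial characterization of $\mathcal{P}$: I must make precise that removing trailing connections from a final \textsc{nap} retraces exactly the sequence of $\mathrm{T_S}$ first arguments, and that every intermediate prefix is itself a well-formed conditional \textsc{nap} (which holds because $\mathrm{T_R}$ is assumed not to fail, so every \textsc{nap} generated satisfies Definition~\ref{def:conditional-naps}). Once this bookkeeping is settled, clauses (1a), (2a), (2b) and (2c) follow directly from the guard construction of $\mathrm{T_R}$/$\mathrm{T_S}$ together with Lemmas~\ref{lemma:Uniqueness-of-S-for-the-same-Pif-in-Ts} and~\ref{lemma:Sequentiality-of-Ts}.
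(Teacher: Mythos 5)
Your proposal is correct and follows essentially the same route as the paper: the paper likewise observes that every element of each $\mathrm{AllSub}(\nested{P_{\mathit{if}_i}})$ occurs as the first argument of some $\mathrm{T_S}$ call during the expansion of $\mathrm{T_R}[\mathrm{Rule}]$, derives (1a) from the non-overlapping guard construction of $\mathrm{T_R}$, gets (2b) from the prefix-closure of $\mathrm{AllSub}$, and obtains (2a) and (2c) from Lemma~\ref{lemma:Sequentiality-of-Ts} (which itself rests on Lemma~\ref{lemma:Uniqueness-of-S-for-the-same-Pif-in-Ts}, exactly as you invoke it). Your explicit flagging of the characterization of $\mathcal{P}$ as the delicate bookkeeping step is a fair refinement of what the paper asserts more tersely via the chain of $\subseteq$-related prefixes, but it is not a different argument.
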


\begin{proof}
We examine whether the set
$\mathcal{A}=\mathrm{AllSub}(\nested{P_{{\mathit{if}_1}}}) \ \cup \ \cdots \ \cup \ 
  \mathrm{AllSub}(\nested{P_{{\mathit{if}_m}}})$
satisfies sequentiality conditions in
Definition~\ref{definition:sequentiality-of-cnap}.

First,
we check the condition (1a).
Each $\nested{P_{{\mathit{if}_i}}}$ is obtained 
from the same rule notation $\mathrm{Rule}$ by using the translation of $\mathrm{T_R}$.
Therefore, when 
$\nested{\alpha(\vec{x}) \inter \beta(\vec{y}) \ \mathrm{if}\ \mathit{c}_1},
\nested{\alpha(\vec{x}) \inter \beta(\vec{y}) \ \mathrm{if}\ \mathit{c}_2} \in \mathcal{A}$,
$c_1$ and $c_2$ are disjoint, by the definition of $\mathrm{T_R}$.

Next, we verify conditions in (2).
We suppose that 
$(\nested{\alpha(\vec{x}) \inter \beta(\vec{y}) \ \mathrm{if}\ \mathit{c}, 
\ u_1, \ u_2, \ldots, \ u_n} \ito N)$ is derived from 
$\mathrm{T_R}[\mathrm{Rule}]$.
All elements of $\mathrm{AllSub}(\nested{\alpha(\vec{x}) \inter \beta(\vec{y}) \ \mathrm{if}\ \mathit{c}, 
\ u_1, \ u_2, \ldots, \ u_n})$ 
are related by $\subseteq$ as follows:
$$\nested{\alpha(\vec{x}) \inter \beta(\vec{y}) \ \mathrm{if}\ \mathit{c}}
\subseteq 
\nested{\alpha(\vec{x}) \inter \beta(\vec{y}) \ \mathrm{if}\ \mathit{c}, \ u_1}
\subseteq \cdots \subseteq 
\nested{\alpha(\vec{x}) \inter \beta(\vec{y}) \ \mathrm{if}\ \mathit{c}, 
\ u_1, \ u_2, \ldots, \ u_n}.$$
By the definition of $\mathrm{T_R}$,
for each two elements such that $\nested{P_\mathit{if}} \subseteq \nested{P_\mathit{if}, \ u}$
in the $\mathrm{AllSub}$,
there is the following expansion:
$$\mathrm{T_S}[\nested{P_\mathit{if}}, \ S] =
      \ldots, \ \mathrm{T_S}[\nested{P_\mathit{if}, \ u}, \ S'], \ \ldots.$$
Thus every element
must occur as the first argument of $\mathrm{T_S}$ during the applications of $\mathrm{T_R}$. 
Therefore, by Lemma~\ref{lemma:Sequentiality-of-Ts},
each $\mathrm{AllSub}(\nested{P_{{\mathit{if}_i}}})$ satisfies (2a) and (2c).
(2b) also is satisfied by the definition of $\mathrm{AllSub}$.
$\mathrm{AllSub}(\nested{P_{{\mathit{if}_i}}})$
and $\mathrm{AllSub}(\nested{P_{{\mathit{if}_j}}})$ are disjoint
because these are derived for the same $\mathrm{Rule}$ by $\mathrm{T_R}[\mathrm{Rule}]$.
Therefore,
$\mathcal{A}$ also satisfies all these conditions.
\end{proof}

We suppose that
$\mathcal{R_\mathit{if}}$ is a rule set whose elements are 
obtained from the same rule notation $\mathrm{Rule}$
by using $\mathrm{T_R}[\mathrm{Rule}]$.
Rules in $\mathcal{R_\mathit{if}}$ do not have any overlapped conditions,
and thus there are not any rules $(\nested{P_\mathit{if}} \ito N)$
such that the $\nested{P_\mathit{if}}$ is a subnet of
$\nested{P'_\mathit{if}}$ for other rules $(\nested{P'_\mathit{if}} \ito N')$
in $\mathcal{R_\mathit{if}}$.
Therefore, as long as we use the rule notation,
the rule set derived from $\mathcal{R_\mathit{if}}$ will be 
pairwise distinct.

\begin{proposition}\label{proposition:well-formedness-of-Tr}
  Let $\mathrm{Rule}$ be a string accepted by the rule notation,
  and $\mathcal{R_\mathit{if}}$ a set
  whose elements are obtained from $\mathrm{T_R}[\mathrm{Rule}]$.
  Then, $\mathcal{R_\mathit{if}}$ is pairwise distinct. \qed
\end{proposition}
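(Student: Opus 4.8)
The plan is to establish that $\mathcal{R_\mathit{if}}$ satisfies both conditions in Definition~\ref{def:well-formed-condition-set}, reusing the structural results already proved for $\mathrm{T_R}$. The first condition---that there is a local sequential set containing every conditional \textsc{nap} appearing as the left-hand side of a rule in $\mathcal{R_\mathit{if}}$---is almost immediate from Proposition~\ref{proposition:sequentiality-of-Tr}. First I would recall that the rules of $\mathcal{R_\mathit{if}}$ are exactly $(\nested{P_{{\mathit{if}_1}}} \ito N_1), \ldots, (\nested{P_{{\mathit{if}_m}}} \ito N_m)$ obtained from $\mathrm{T_R}[\mathrm{Rule}]$, and then take the set $\mathcal{A} = \mathrm{AllSub}(\nested{P_{{\mathit{if}_1}}}) \cup \cdots \cup \mathrm{AllSub}(\nested{P_{{\mathit{if}_m}}})$. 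Since each $\nested{P_{{\mathit{if}_i}}} \in \mathrm{AllSub}(\nested{P_{{\mathit{if}_i}}}) \subseteq \mathcal{A}$, this $\mathcal{A}$ contains all the required left-hand sides, and Proposition~\ref{proposition:sequentiality-of-Tr} tells us $\mathcal{A}$ is sequential. So the first bullet of Definition~\ref{def:well-formed-condition-set} holds with $\mathcal{A}$ as the witnessing sequential set.

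The second step is to verify the no-subnet condition: for any $(\nested{P_\mathit{if}} \ito N) \in \mathcal{R_\mathit{if}}$ there is no $(\nested{P'_\mathit{if}} \ito N') \in \mathcal{R_\mathit{if}}$ with $\nested{P'_\mathit{if}}$ a subnet of $\nested{P_\mathit{if}}$. Here I would argue by the branching structure of $\mathrm{T_R}$: two distinct rules $\nested{P_{{\mathit{if}_i}}}$ and $\nested{P_{{\mathit{if}_j}}}$ coming from the same rule notation first diverge at some connection position, where they either attach agents to different ports, attach distinct agents $\gamma$ to the same port, or attach the same agent with conditional expressions that are not overlapped---exactly the alternatives controlled by Lemma~\ref{lemma:Sequentiality-of-Ts}. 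In the first two cases the condition dropped net $\nested{P'}$ cannot be a subnet of $\nested{P}$ because the underlying (non-conditional) \textsc{nap}s already differ in shape at that branching port. In the remaining case the agents match but the guards are disjoint, so by the definition of subnet---which requires $\nested{P'_\mathit{if}}$ to be available whenever $\nested{P_\mathit{if}}$ is---availability of one precludes availability of the other, and hence neither is a subnet of the other in the conditional sense.

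The main obstacle I anticipate is the precise handling of the subnet definition in the disjoint-guard case, because the subnet relation mixes a purely structural requirement on the condition dropped \textsc{nap}s with a semantic implication about availability. I would need to be careful that when two left-hand sides share the same condition dropped shape along a common prefix but diverge only through guards, the disjointness supplied by Lemma~\ref{lemma:Sequentiality-of-Ts} genuinely blocks the availability implication demanded by the subnet definition, rather than merely blocking structural inclusion. The rest of the argument---the appeal to Proposition~\ref{proposition:sequentiality-of-Tr} for sequentiality and the observation that the $\mathrm{AllSub}$ sets for distinct rules are disjoint---is routine bookkeeping. Concluding, once both conditions of Definition~\ref{def:well-formed-condition-set} are checked, $\mathcal{R_\mathit{if}}$ is pairwise distinct by definition, which is the claim.
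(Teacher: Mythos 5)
Your overall route is the same as the paper's: the paper likewise discharges the first bullet of Definition~\ref{def:well-formed-condition-set} by appealing to Proposition~\ref{proposition:sequentiality-of-Tr} (with the union of the $\mathrm{AllSub}$ sets as the witnessing sequential set), and the second by observing that $\mathrm{T_R}$ produces no overlapping conditions. Your first paragraph is correct as it stands. However, your case analysis for the no-subnet condition has one genuine hole: you assert that two distinct rules ``first diverge at some connection position,'' which silently excludes the case where there is \emph{no} divergence at all---that is, where one left-hand side is a proper, guard-identical prefix of the other, say $(\nested{P_\mathit{if}} \ito N_1)$ alongside $(\nested{P_\mathit{if}, \ z-\gamma(\vec{w}) \ \mathrm{if}\ c, \ \vec{u}} \ito N_2)$. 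This is precisely the one scenario in which pairwise distinctness would actually fail: the condition dropped \textsc{nap}s are in the subnet relation, and since every guard of the shorter \textsc{nap} occurs among the guards of the longer one, availability of the longer \emph{does} imply availability of the shorter, so the availability clause of the subnet definition---which rescues you in the disjoint-guard case---offers no protection here. You must argue that $\mathrm{T_R}$ cannot generate such a pair, and this follows from Lemma~\ref{lemma:Uniqueness-of-S-for-the-same-Pif-in-Ts}: the shorter rule forces an expansion $\mathrm{T_S}[\nested{P_\mathit{if}}, \ \ito N_1]$ with a terminal spray, while the longer forces $\mathrm{T_S}[\nested{P_\mathit{if}}, \ S]$ with a case spray, and the lemma says these sprays must be identical---a contradiction. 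Without this step, the claim that a first divergence exists is unjustified, and it is exactly where the proposition could break.

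Two smaller slips, both harmless. First, you misread Lemma~\ref{lemma:Sequentiality-of-Ts}: it \emph{excludes} your ``attach agents to different ports'' alternative (it asserts $y_1 = y_2$) rather than offering it as one of the cases; and the first divergence can also occur at the base-pair guards rather than at a connection, which is handled the same way by the disjointification built into the definition of $\mathrm{T_R}$. Second, the subtlety you flag about the disjoint-guard case is in fact resolved correctly by your own argument: when the dropped \textsc{nap}s share a guard-identical prefix and diverge through disjoint guards on the same agent and port, neither conditional \textsc{nap} can be available when the other is, so neither is a subnet of the other; and when distinct agents are attached at the same port, the dropped nets already disagree structurally. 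With the prefix case closed via Lemma~\ref{lemma:Uniqueness-of-S-for-the-same-Pif-in-Ts}, your proof is complete and is a more detailed rendering of the paper's own (much terser) justification.
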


\section{Discussion}\label{sec:discussion}

\paragraph{Implementation:}
Using rules on conditional \textsc{nap}s allows us to write algorithms naturally and reduces the number of reduction steps in comparison with using non-nested ones.
However, when it comes to the implementation of these rules, 
we need an efficient matching mechanism so that the rule application
waits until each nested agent is connected to all its ports and avoids repeatedly checking that these connections have been made.
This method can be realised by the $\mathrm{T}$ translation, 
since it ensures that the nested agents are connected.
Given this, 
the matching can then be performed by the net evaluator.
Another advantage of using the $\mathrm{T}$ translation
is that it is then possible to perform some rule optimisation, by combining rules.
For instance, we take the translated rules in Example~\ref{example:trans-gcd}.
The RHS of the rule $\mathtt{gcd\_Pair\_tt\_ot}(b) \inter \mathtt{Int}(a)$
has an agent pair $\mathtt{gcd} \inter \mathtt{Pair}$.
Thus, by applying a rule for $\mathtt{gcd} \inter \mathtt{Pair}$ to the RHS,
we can have a one-step reduced optimisation rule as follows:
\begin{center}
\includegraphics[scale=\smallscale,keepaspectratio,clip]{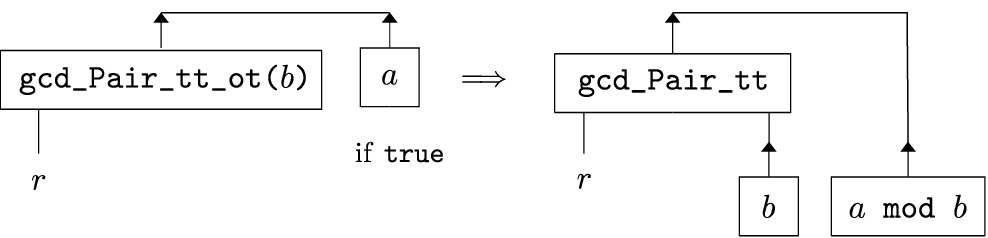}
\end{center}

\noindent{}Therefore we can say that by using the translation T, conditional \textsc{nap}s can be realised, taking advantage of the expressiveness of 
the case notation and the already implemented evaluator
such as Inpla~\cite{DBLP:phd/ethos/Sato15}
\footnote{An up-to-date implementation is available from \url{https://github.com/inpla/inpla}}.

\paragraph{Related work:} 
It is also possible to realise nested pattern matching for pure interaction nets 
by using other approaches~\cite{DBLP:conf/sas/Bechet92,DBLP:journals/entcs/SinotM05} where agents are allowed to have more than one principal port.
However, there are some limitations, as discussed in \cite{DBLP:journals/entcs/HassanS08}.
For example, there is no natural way to express the function $\texttt{lastElt}$, which returns the last element of a list. 
The last element of a list $\texttt{[} x \texttt{]}$ is 
matched by using a pattern,
which is the LHS of a rule,
with $\mathtt{Cons}$ agent having two principal ports 
and $\mathtt{Nil}$ agent. However, 
to keep the one-step confluence, 
the patterns must be such that 
every principal port is connected to the principal port of another agent.
Thus, because two principal ports of $\mathtt{Cons}$ cannot be free,
we need to have every pattern for lists 
such as $\texttt{[} y_1,x \texttt{]}, \texttt{[}y_1,y_2,x \texttt{]}, \ldots$.
\begin{center}
\includegraphics[scale=\tinyscale,keepaspectratio,clip]{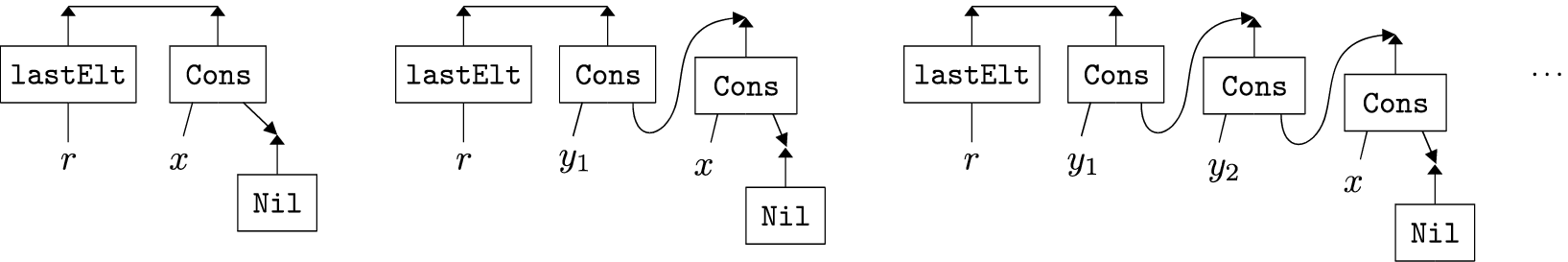}
\end{center}

\section{Conclusion}\label{sec:conclusion}

In this paper we introduced conditional nested pattern matching
as an extension of \cite{DBLP:journals/entcs/HassanS08},
and showed that as long as a rule set is pairwise distinct,
the rewritings can continue to be one-step confluent.
This not only allows programs to be more easily written in interaction net form, but also allows the execution to be performed by existing evaluators. Most well-known algorithms contain computations of conditional nested pattern matching, and these can now be realised in interaction nets. We expect this will contribute to finding new ways of implementing such algorithms, taking advantage of the benefits of interaction nets, such as built-in parallelism
and internal garbage collection.

\bibliography{bibfile}

\end{document}